\newtheorem{remark}{Remark}
\newcommand{\btab}{\begin{tabbing}}
\newcommand{\etab}{\end{tabbing}}
\begin{document}


\title{Randomization Resilient To Sensitive Reconstruction}



%
%
%
%


\numberofauthors{2}
\author{
%
%
\alignauthor Ke Wang \ \ \ \ \ \ \ \ \ \ \ \  Chao Han \\
       \affaddr{School of Computing Science}\\
       \affaddr{Simon Fraser University}\\
       \affaddr{British Columbia. Canada}\\
       \email{{wangk,hanchao}@cs.sfu.ca}
\alignauthor Ada Waichee Fu\\
       \affaddr{Department of Computer Science and Engineering}\\
       \affaddr{Chinese University of Hong Kong}\\
       \email{adafu@cse.cuhk.edu.hk}
       }
\maketitle

\begin{abstract}
With the randomization approach, sensitive data items of records are randomized to protect privacy of individuals while allowing the distribution information to be reconstructed for data analysis.
In this paper, we distinguish between reconstruction that has potential privacy risk, called \emph{micro reconstruction}, and reconstruction that does not, called \emph{aggregate reconstruction}.
We show that the former could disclose sensitive information about a target individual, whereas the latter is more useful for data analysis than for privacy breaches. To limit the privacy risk of micro reconstruction, we propose a privacy definition, called \emph{$(\varepsilon,\delta)$-reconstruction-privacy}.
Intuitively, this privacy notion requires that micro reconstruction has a large error with a large probability. The promise of this approach is that micro reconstruction is more sensitive to the number of independent trials in the randomization process than aggregate reconstruction is; therefore, reducing the number of independent trials helps achieve $(\varepsilon,\delta)$-reconstruction-privacy while preserving the accuracy of aggregate reconstruction. We present an algorithm based on this idea and evaluate the effectiveness of this approach using real life data sets.
\end{abstract}

\section{Introduction}
Randomization is one of the promising approaches in
privacy-preserving data mining. With this approach,
sensitive data items in records are randomized to protect the privacy of individuals while allowing the distribution information to be reconstructed with reasonable accuracy. An early
use of randomization is \emph{randomized response} (RR) for
collecting responses on sensitive questions \cite{W65}. For example, to find the percentage of employees stealing from the company, the employer asks each employee the question ``do you steal from the company?".
To prevent linking the responder to his/her sensitive response, each employee submits the true answer (``Yes or ``No") with a certain \emph{retention probability} $p$ and submits an answer chosen from $\{Yes, No\}$ at random with probability $(1-p)/2$. This type of randomization, also called \emph{input perturbation}, is extended to categorical values in \emph{privacy preserving data mining} for mining association rules \cite{E03,AST05,E02,R02}. Randomization is also studied in \emph{privacy preserving data publishing} where a data publisher has collected the original data $D$ and wants to release a sanitized version $D^*$ for data mining \cite{AH05,ZD08,RHS07,Xiao09,CW10}.

In this paper, we consider the data publishing scenario in which the data set $D$ contains both non-sensitive attributes (e.g., age, gender, etc.) and a sensitive attribute (e.g., disease), as in most realistic settings. We assume that an adversary has named a \emph{target individual}, $t$, whose record is contained in $D$, and has figured out somehow the
non-sensitive attributes of $t$. The adversary's goal is to infer the sensitive attribute of $t$. To preserve the privacy of individuals, the
sensitive attribute value in each record is randomized following a certain retention
probability $p$, while allowing reconstruction of distribution information
such as the count of records in $D$ satisfying a given predicate $\varphi$. We show that, with the help of non-sensitive attributes, the adversary could reconstruct the distribution of the sensitive attribute for a target individual, even if major privacy definitions are satisfied. If this distribution is skewed, the target individual's privacy is breached. This attack is termed ``reconstruction attack".

\subsection{Reconstruction Attacks}
One major privacy definition is limiting the change in adversary's
confidence in the sensitive value $x$ of a given record as a result of interacting
with or exposure to the database. For example, the
\emph{$\rho_1$-$\rho_2$ privacy} proposed in \cite{E03} states that
if the prior probability $\Pr[X=x]$ is not more than $\rho_1$, the
posterior probability $\Pr[X=x | Y=y]$, given the published data $D^*$, should not be more than
$\rho_2$, where $\rho_1<\rho_2$ and $X$ and $Y$ are the variables
for the original and perturbed sensitive values in a record, respectively. In
the literature \cite{E03,AH05,AST05,Xiao09,CW10}, $\Pr[X=x]$ is
measured by the fraction of records with $X=x$ in the \emph{whole}
table $D$, and $\Pr[X=x | Y=y]$ is measured by the fraction of
records with $X=x$ among the records with $Y=y$ in the \emph{whole} table $D^*$.
Precisely, \[\Pr[X=x\mid Y=y]=\frac{\Pr[X=x]\cdot p[x\rightarrow
y]}{\sum_x \Pr[X=x]\cdot p[x\rightarrow y]} \] where $p[x\rightarrow
y]$ is the probability that $x$ is perturbed to $y$, and
can be determined by the retention probability $p$.
Note that these measurements do not take into account the non-sensitive attributes of records in $D$ or the acquired non-sensitive information about the target individual $t$. The next example shows that with non-sensitive information, the adversary could infer the sensitive information of $t$ with a probability higher than $\rho_2$, even if $\rho_1$-$\rho_2$ privacy is ensured.

\begin{example}[Attacks on $\rho_1$-$\rho_2$ privacy]\label{example1}
Let $D$ contain $10\times k$ records over the sensitive attribute
$Disease$ and the non-sensitive attributes $\{Gender,Age\}$, where $k$ is an integer and $Disease$
has the domain $\{x_1,\cdots,x_{10}\}$. Suppose that $k$ records in $D$ have
$Gender=M$ and $Age=30$, all of which have the value $x_1$ for $Disease$. Let $g$ denote this set of records. $x_2$-$x_{10}$ are uniformly distributed among the remaining $9\times k$ records in $D$. Note, for $1\leq i\leq 10$, $\Pr[X=x_i]=10\%$, and 0.1-0.5 privacy ensures $\Pr[X=x_i | Y=y]\leq 50\%$ for all $x_i$. This level of privacy can be achieved by retaining the original value in a record with probability $50\%$ and perturbing $x_i$ randomly to a different value (i.e., $\{x_2,\cdots,x_{10}\}$) with probability $(1-0.5)/9$ \cite{E03,AH05,CW10}. Let $g^*$ denote the randomized version of $g$.

Suppose that an adversary wants to infer the disease of the target individual $t=Bob$ having the non-sensitive information $Gender=M$ and $Age=30$. The adversary could estimate the (relative) frequencies of $x_1,\cdots, x_{10}$ in $g$ based on $g^*$, instead of $D^*$, because all other records in $D^*$ do not match $t$'s non-sensitive information. Let $\langle F'_1,\cdots,F'_{10} \rangle$ be the estimated frequencies in $g$. For a sufficiently large $g$ (by using a large $k$) and a reasonable estimator such as the \emph{maximum likelihood estimator (MLE)}, $F'_1$ will be sufficiently close to the true frequency $f_1$ \cite{AST05}, which is 100\%. Consequently, the adversary is able to infer that $t$ has the disease $x_1$ with a probability larger than $\rho_2=0.5$.

\end{example}


A recent breakthrough in privacy definition is \emph{differential privacy}
\cite{Dwork06}. The idea is hiding the presence or
absence of a participant in the database by making two neighbor data sets (nearly) equally probable for giving the produced query answer.
Precisely, the $\lambda$-differential privacy mechanism ensures that, for any two data sets $D$ and $D'$ differing on at most one record, for all queries $Q$, and for all query
outputs $o'$,
\[\Pr[K(D,Q) = o'] \leq exp(\lambda) \Pr[K(D',Q) = o']
\]
With a small $\lambda$, $exp(\lambda)$ is close to 1, so $D$ and
$D'$ are almost equally likely to be the underlying database that
produces the final output of the query. To ensure this property, the
$\lambda$-differential privacy mechanism adds the noise $\xi$ to the true answer $o$ and publishes the noisy answer $o'=o +\xi$, where $\xi$ follows the Laplace distribution
$Lap(b)=\frac{1}{2b}exp(-\frac{|\xi|}{b})$, $b=1/\lambda$.
The next example shows that such noisy answers can be exploited
to estimate the likelihood of the sensitive value for a target individual.

\begin{example}[Attacks on differential privacy]\label{example2}
Consider the $D$ and $t$ again in Example
\ref{example1}. An adversary could infer the distribution of Disease for $t$
by issuing two queries $Q_1$ and $Q_2$: $Q_1$ asks for the count of
records that satisfy ``$Gender=M \wedge Age=30$" and gets the noisy answer $o'_1=o_1+\xi_1$, and $Q_2$ asks for the count of records that satisfy ``$Gender= M  \wedge Age=30 \wedge Disease=x_1$" and gets the noisy answer $o'_2=o_2+\xi_2$, where $o_i$ are the true answers
and $\xi_i$ are the noises added, $i=1,2$. Note that the relative error
$\frac{\xi_i}{o_i}$ gets smaller as the true answer
$o_i$ gets larger, because  $\xi_i$ has the zero mean and the variance $2b^2$, where $b=1/\lambda$ is a constant for a given $\lambda$-differential privacy mechanism.
Therefore, as the answer $o_1$ increases, $o'_2/o'_1$ approaches $o_2/o_1$, the
fraction of records having $x_1$ among the records that share the gender and age with $t$. This discloses the disease $x_1$ of $t$ because $o_2/o_1 = 100\%$.
\end{example}

In these examples, randomized data or noisy query answers are used to reconstruct the distribution of sensitive information for a target individual, even though strong privacy definitions are satisfied. If such reconstruction is accurate and if the true distribution is skewed, as in these examples, the reconstructed distribution discloses the sensitive information of the target individual with a high probability. This attack is powerful in that it works on different types of randomization techniques and data sharing scenarios, i.e., the random value replacement in Example \ref{example1}, through either input perturbation or data publishing; random noise addition to query answers in Example \ref{example2}, also known as \emph{output perturbation}.

\subsection{Contributions}
The contributions in this work are as follows.

\begin{itemize}
\item
For the first time, we consider the implication of non-sensitive attributes on sensitive reconstruction of data distribution from randomized data. We distinguish two types of reconstruction: The \emph{micro reconstruction} seeks to reconstruct the distribution of the sensitive attribute in a set of records that fully match a target individual on \emph{all} non-sensitive attributes; the \emph{aggregate reconstruction} aims to reconstruct the distribution in a set of records that only partially match a target individual. We argue that micro reconstruction is all we have to be concerned with about privacy risk.

\item
To address the privacy risk of micro reconstruction, we propose a notion of \emph{$(\varepsilon,\delta)$-reconstruction-privacy} to ensure a minimum value on the tail probabilities of micro reconstruction error. We present a \emph{bound conversion theorem} that converts between a bound on tail probabilities of a random variable and a bound on tail probabilities of reconstruction error, which allows us to leverage the Chernoff bound to develop a testable instantiation of $(\varepsilon,\delta)$-reconstruction-privacy. Since the bound conversion theorem does not hinge on the particular form of bounds, our approach can be instantiated to other upper bounds and modified to constrain lower bounds of tail probabilities.

\item
 The promise of this approach is that micro reconstruction is more sensitive to the number of independent trials in the randomization process than aggregate reconstruction, analogous to the fact that the first 10 coin flips are more critical for the estimation of head probability than the second 10 coin flips. We leverage this difference to design an algorithm for achieving $(\varepsilon,\delta)$-reconstruction-privacy while preserving the utility of aggregate reconstruction.

\item Empirical evaluation on real life data sets presents two important findings: Firstly, $(\varepsilon,\delta)$-reconstruction-privacy is violated even when major privacy definitions such as $\rho_1$-$\rho_2$ privacy and differential privacy are satisfied. Secondly, the additional information loss incurred for achieving $(\varepsilon,\delta)$-reconstruction-privacy is small.
\end{itemize}

The rest of the paper is organized as follows. Section 2 reviews related work. Section 3 defines the problem studied in this work. Section 4 presents an efficient instantiation of $(\varepsilon,\delta)$-reconstruction-privacy. Section 5 presents the algorithm to achieve $(\varepsilon,\delta)$-reconstruction-privacy.
Section 6 presents empirical findings. Finally, we conclude the paper.

\section{Related Work}
Two classes of randomization methods have been extensively studied
in the literature: \emph{random perturbation} and \emph{randomized response}.
Random perturbation is primarily used for quantitative data.
For example, Agrawal and Srikant \cite{AS00} build accurate decision
tree classification models on the perturbed data, and Kargupta et
al. \cite{K03} point out that arbitrary randomization can reveal
significant amount of information under certain conditions.
Randomized response is primarily used for categorical
data. Its basic idea was proposed by Warner \cite{W65}, and based on this technique the problem of
mining association rules from disguised data was studied in \cite{E03,E02,R02}. In this paper,
the term ``perturbation" or ``randomization" refers to the randomization for categorical data.

Techniques for probabilistic perturbation have also been
investigated in the statistics literature. The PRAM method
\cite{G97} considers the use of Markovian perturbation matrices. The
disclosure risk is measured by a notion of \emph{expectation ratios},
defined as the ratio of the expected number of records in the
perturbed file with the observed value equal to the value in the
original file, and the expected number of records in the perturbed
file with the observed value not equal to the value in original
file.

Formal definitions of privacy breaches were proposed in
\cite{E03,D03} following the same paradigm: for every record in the
database, the adversary's confidence in the values of the given
record should not significantly increase as a result of interacting
with or exposure to the database. Recent works based on such definitions include
\cite{RHS07,ZD08,AH05,AST05,Xiao09,CW10}. These approaches either consider one attribute (i.e., the sensitive attribute) \cite{ZD08}, or assume that all attributes are sensitive \cite{RHS07,AH05,AST05}, or ignore the role of non-sensitive attributes in the reconstruction of the sensitive attribute in the context of privacy risk \cite{Xiao09,CW10}. Reconstruction of data distribution is traditionally considered as utility. To our knowledge, our work is the first to study such reconstruction as privacy breaches.

An alternative to the randomization approach is the partition based approach in which
the records are partitioned to ensure some sort of balanced distribution of sensitive
data items in each partition \cite{MKG+006,XT06b}.
The randomization approach, due to its non-deterministic nature, is more robust to
auxiliary information \cite{Kifer09,R10,TaoCorr08}.

The differential privacy mechanism \cite{Dwork06} hides the presence
of a single record in the database by adding random noises to a query answer. As we will see in Section 6,
such noise addition is not sufficient to prevent the adversary from reconstructing the
distribution of sensitive data for a target individual.

\section{Problem Statement}
We assume that the data publisher has collected a table $D(NA,SA)$ on \emph{non-sensitive attributes} $NA=\{A_1,\cdots,A_d\}$ and one \emph{sensitive attribute} $SA$. Each record in the table corresponds to a
participant or individual. For a record $r$ in $D$, $r[NA]$ and $r[SA]$
denote the values of $r$ on $NA$ and $SA$. $|\cdot|$ denotes the
cardinality of a set. The sensitive attribute $SA$ has a discrete domain $\{x_1,\cdots,x_m\}$.
The \emph{count} of $x_i$ refers to the number of records having $x_i$, and the \emph{frequency} of $x_i$ refers to the percentage of records having $x_i$. As in \cite{AH05,E03,AST05,ZD08}, we assume that the $SA$ value in a record is chosen independently at random according to some fixed probability distribution. The publisher allows the
researcher to learn this distribution, but wants to hide the $SA$ value of an
individual record.



\subsection{Perturbation}
We consider the data publishing scenario where the data publisher wants to publish $D$ for data analysis, but wants to hide the $SA$ value in a record. In the \emph{uniform perturbation} \cite{AH05,E03,AST05,ZD08}, the $SA$ value $x$ in a record is processed by flipping a coin with head probability $0< p < 1$, called \emph{retention probability}.
If the coin lands on heads, $x$ is retained; otherwise, $x$ is
replaced with a random value from the domain of $SA$, where each value is selected with probability $(1-p)/m$. This perturbation process is parameterized by the perturbation matrix
$\mathbb{P}_{m\times m}$:
\begin{equation}\label{E1}
\mathbb{P}_{ji} = \left\{
\begin{array}{l}
p + \frac{1 - p}{m}\;\;\text{if } j$$=$$i\;\text{(retain $x_i$)}\\
\frac{1 - p}{m}\;\;\;\;\;\;\;\;\text{if } j$$\neq$$i\;\text{(perturb $x_i$
to $x_j$)}
\end{array} \right.
\end{equation}
$p +
\frac{1-p}{m}$ is the sum of the probability that $x_i$ is retained and
the probability that $x_i$ is replaced with the same $x_i$. Let $D^*$ contain all
perturbed records. For any subset $S$ of $D$, $S^*$ denotes the same set of records as $S$ in $D^*$. Note $|S^*| = |S|$. The choice of $p$ dictates the trade-off between the privacy concern
of hiding the sensitive value in a record and the utility for reconstructing the distribution of $SA$. The work in \cite{E03,CW10} determines the maximum retention probability $p$ for ensuring a given $\rho_1$-$\rho_2$ privacy \cite{E03} based on $\rho_1,\rho_2$, and $m$.


The above perturbation process has some interesting properties. First, it modifies only the $SA$ attribute, not $NA$ attributes. Therefore, data analysis involving only $NA$ attributes incurs no information loss by accessing the randomized data $D^*$. This is an advantage compared to the differential privacy mechanism \cite{Dwork06} where a query answer will be distorted even if it only involves non-sensitive attributes. Second, the perturbation of a record depends on the original $SA$ attribute in the record, but not on any other records in $D$. Therefore, for any subset $S$ of records from $D$, we can assume that $S^*$ is produced by the same perturbation matrix $\mathbb{P}$. This record independence also implies that insertion and deletion of records on $D$ can be done through insertion and deletion of randomized records on $D^*$.

We consider data analysis through answering \emph{count queries}. A count query
 has a predicate $\varphi$ of the form $\wedge (A=a)$, where $A$ is either $SA$ or an attribute in $NA$, and $a$ is a value from the domain of $A$.  The answer to the query is the count of the records in $D$ satisfying $\varphi$. This answer must be estimated using $D^*$. If $\varphi$ contains no equality for $SA$,
the answer on $D^*$ is exactly same as the answer on $D$. If $\varphi$ contains an equality $SA=x_i$,
a \emph{reconstruction} process will be applied to the subset of records in $D^*$ that satisfy $\varphi^{-}$, where $\varphi^{-}$ is $\varphi$ with the equality $SA=x_i$ removed. Let $S^*$ be this subset and let $S$ be the set of corresponding records in $D$. The reconstruction seeks the most likely estimator of the distribution of $SA$ in $S$, denoted by $\overleftarrow{F'}$, given $S^*$ and the perturbation operator $\mathbb{P}$. The answer for the query is estimated by $|S|F'_i$, where $F'_i$ is the component of  $\overleftarrow{F'}$ for $x_i$. The detailed reconstruction will be discussed in Section 4.1.

\subsection{Adversaries and Micro Reconstruction}
We assume that an adversary has named some \emph{target individual}, denoted $t$, whose record is contained in $D$, and has figured out $t$'s values on all non-sensitive attributes $NA$.
%
To infer the $SA$ value of $t$, the adversary needs to reconstruct the frequencies $\overleftarrow{F'}$ of $SA$ values from the randomized data $D^*$. Given the knowledge about $t$'s information on all non-sensitive attributes in $NA$, the adversary would focus the reconstruction process on the records in $D^*$ that match all $t$'s non-sensitive attributes. The next definition formalizes this reconstruction.

\begin{definition}[Micro/Aggregate Reconstruction]\label{micro}
A \emph{micro group} is a set of the records in $D$ that agree on all attributes in $NA$. The \emph{micro reconstruction} seeks to reconstruct the distribution of $SA$ in a micro group. For a target individual $t$, $g_t$ denotes the micro group containing $t$'s record and $g_t^*$ denotes the set of corresponding records in $D^*$. An \emph{aggregate group} is a set of the records in $D$ that agree on zero or more but not all attributes in $NA$. The \emph{aggregate reconstruction} seeks to reconstruct the distribution of $SA$ in an aggregate group.
\end{definition}


The intent of distinguishing these two types of reconstruction
is that micro reconstruction is all we have to be concerned with about privacy risk - aggregate reconstruction
does not present privacy risk. The next example illustrates this point.

\begin{example}\label{example3}
Let $NA=\{Gender,Job\}$ and $SA=Disease$. Consider a target individual $t$ (say Bob) with $Gender=Male$ and $Job=Teacher$. The micro group for $t$, $g_{t}$, contains all records in $D$ with $Gender=Male$ and $Job=Teacher$. The micro reconstruction for $t$ seeks to reconstruct the distribution of $SA$ in $g_t$ using the published $g_t^*$. This reconstruction is most relevant to $t$ because $g_t$ contains all and only the records in $D$ that match $t$'s non-sensitive information. In contrast, aggregate reconstruction involves records that do not match $t'$ information in at least one of $Gender$ and $Job$, such as (1) all records for $Job=Teacher$, or (2) all records for $Gender=Female$, or (3) all records for $Gender=Female \wedge Job=Teacher$, or (4) all records in $D$. These reconstructions are less relevant to $t$ because they are based on more records that do not belong to $t$. For example,
a high estimated frequency of Breast Cancer in (1) does not mean that $t$ has a high chance of getting Breast Cancer because most occurrences of Breast Cancer actually come from female teachers.
\end{example}

In the above, we distinguish two types of reconstruction based on the set of records in which the data distribution is estimated. For each type of reconstruction, we can distinguish two types of estimates based on the records used to derive the estimate.
In Example \ref{example3}, we estimate the distribution of $SA$ in $g_t$ based on the records in $g_t^*$. Alternatively, we can treat $g_t$ as the difference $X-Y$ of two sets $X$ and $Y$, where $S \subseteq X$ and $Y=X-S$, and estimate the distribution of $SA$ in $g_t$ based on the estimates for $X$ and $Y$. For example, for the set of male teachers, $g_t$, $g_t =X-Y$, where $X$ is the set of all teacher records in $D$ and $Y$ is the set of all female teacher records in $D$. If $F'_X$ and $F'_Y$ are the estimated frequencies of Breast Cancer in $X$ and $Y$ based on $X^*$ and $Y^*$, respectively, we can estimate the frequency of Breast Cancer in $g_t$ by $(F'_X |X|-F'_Y |Y|)/|g_t|$. The next definition summarizes these two types of estimation.

\begin{definition}[Local/Global Estimates]\label{estimate}
For any subset $S$ of $D$ and any $SA$ value $x$, the \emph{local estimate} for $x$ wrt $S$ is based on the information in $S^*$, and a \emph{global estimate} for $x$ wrt $S$ is given by $(F'_X |X|-F'_Y |Y|)/|S|$, where $S \subseteq X$, $Y=X-S$, and $F'_X$  and $F'_Y$ are the local estimates of $x$ wrt $X$ and $Y$, respectively.
\end{definition}

Every local estimate is a global estimate in the special case of $X=S$ and $Y=\emptyset$. At first glance, there is a temptation for considering global estimates because the use of a superset $X^*$ is in favor of accurate reconstruction. However, we will show that all global estimates are in fact equal to the local estimate in Section 4.1.

\begin{table}[h]
\centering
\caption{Notations}
\begin{tabular}{|l|l|} \hline
\textbf{Symbols} & \textbf{Meaning} \\ \hline
$m$ & the domain size $|SA|$ \\ \hline
$t$ & a target individual \\ \hline
$S$ & a subset of records in $D$ \\ \hline
$S^*$ & the corresponding set of $S$ in $D^*$ \\ \hline
$g$ & a micro group \\ \hline
$g^*$ & the corresponding set of $g$ in $D^*$ \\ \hline
$x$ & a domain value of $SA$ \\ \hline
$f$ & the frequency of $x$ in $S$ \\ \hline
$O^*$ & the variable for the observed count of $x$ in $S^*$ \\ \hline
$F'$ & the variable for the local estimate of $f$ \\ \hline
$\overleftarrow{f}, \overleftarrow{F'}$, $\overleftarrow{O^*}$ & the column-vectors of $f$, $F'$, $O^*$ \\ \hline
$\mathbb{P}$ & the perturbation matrix in Equation (1) \\ \hline
$p$ &  the retention probability \\ \hline
\end{tabular}
\label{table:notation}
\end{table}

\subsection{Problems}

%

We are now ready to define the problem we will study. We adapt the notation in Table \ref{table:notation} in the rest of the paper. For each target individual $t$, the micro reconstruction for $t$
reconstruct the distribution of $SA$ most relevant to $t$. If the distribution is skewed and if the reconstruction is accurate, $t$'s $SA$ information will be disclosed. To limit this privacy risk,
the next definition formalizes a privacy definition through bounding the accuracy of micro reconstruction.

\begin{definition}[$(\varepsilon, \delta)$-reconstruction-privacy]\label{privacy}
For a micro group $g$, $g^*$ is $(\varepsilon, \delta)$-reconstruction-private, where $\varepsilon\geq 0$ and $\delta \in [0,1]$, if for each $SA$ value $x$ occurring in $g$, whenever $\Pr\left[\frac{F'-f}{f}>
\varepsilon\right]< U$ or $\Pr\left[\frac{F'-f}{f}< -\varepsilon\right]< L$,
$\delta\leq min\{U,L\}$, where $f$ is the frequency of $x$ in $g$
and $F'$ is the variable for a global estimate of $f$ over the random instances of $g^*$.
$D^*$ is $(\varepsilon,
\delta)$-reconstruction-private if $g^*$ is $(\varepsilon, \delta)$-reconstruction-private for every micro group $g$.
\end{definition}


\begin{remark}\label{remark1}
$(\varepsilon, \delta)$-reconstruction-privacy ensures that the (best) upper bounds on tail probabilities for micro reconstruction error greater than $\varepsilon$ or smaller than $-\varepsilon$ are not smaller than $\delta$. In this sense, the adversary has difficulty to lower the probabilities of a large estimation error. The larger the parameters  $\varepsilon$ and $\delta$ are, the greater this difficulty is and the more secure the published data is. In this definition, $\delta$ is a constraint on the upper bounds of tail probabilities (i.e., $U$ and $L$). This formulation allows us to leverage the extensive research on upper bounds of tail probabilities in the literature. Alternatively, $\delta$ could be a constraint on the lower bounds of tail probabilities if such bounds are available, and from Theorem \ref{UL-bound}, our approach does not hinge on whether $U$ and $L$ are upper bounds or lower bounds. In this definition, we consider the estimate $F'$ estimated from randomized data. In Section 6.3, we will show that the same privacy notion can be applied to $F'$ estimated from noisy query answers such as those produced by the differential privacy mechanism.
\end{remark}



\begin{definition}[The Problem]
Given a data set $D$, a retention probability $p$ for randomization, $\varepsilon$, and $\delta$, where $\varepsilon\geq 0$ and $\delta \in [0,1]$,
we want to produce a randomized version $D^*$ that satisfies $(\varepsilon, \delta)$-reconstruction-privacy while information for aggregate reconstruction is preserved.
\end{definition}

Two main problems are to be solved: how to test if $(\varepsilon, \delta)$-reconstruction-privacy is satisfied, and how to achieve $(\varepsilon, \delta)$-reconstruction-privacy on a given data set. We answer the first question in Section 4 and answer the second question in Section 5.

\section{Testing Privacy}
We first present an estimation technique for $F'$ and then present a probabilistic bound for the estimation error of $F'$. In the discussion below, the reader is referred to the notations in Table \ref{table:notation}.

\subsection{Maximum Likelihood Estimator}
We adapt the \emph{maximum likelihood estimator (MLE)} as our model of local estimates.
The next theorem follows from Theorem 2 in \cite{AST05}.

\begin{theorem}[Theorem 2, \cite{AST05}]\label{MLE}
For a subset of records $S$ and any $SA$ value $x$, $\overleftarrow{F'}$ computed by $\mathbb{P}^{-1}\cdot \frac{\overleftarrow{O^*}}{|S|}$ is the maximum likelihood estimator (MLE) of $\overleftarrow{f}$ in $S$, under the constraint $\Sigma F' = 1$, where $\Sigma$ is over all elements of $\overleftarrow{F'}$.
\end{theorem}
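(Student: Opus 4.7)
The plan is to proceed through the classical multinomial MLE argument, exploiting the invariance of maximum likelihood under invertible reparameterizations. Since records in $S$ are perturbed independently and only through their $SA$ value, each perturbed record $r^*$ in $S^*$ takes the value $x_j$ with probability $q_j := \sum_i \mathbb{P}_{ji} f_i = (\mathbb{P}\overleftarrow{f})_j$. Hence the vector of observed counts $\overleftarrow{O^*}$ is a multinomial random vector with $|S|$ trials and parameter vector $\overleftarrow{q} = \mathbb{P}\overleftarrow{f}$.

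First I would write down the log-likelihood as a function of $\overleftarrow{f}$, namely $\ell(\overleftarrow{f}) = \sum_{j=1}^m O^*_j \log(\mathbb{P}\overleftarrow{f})_j$ (dropping terms independent of $\overleftarrow{f}$), and reparameterize by $\overleftarrow{q} = \mathbb{P}\overleftarrow{f}$. The matrix $\mathbb{P}$ from Equation~(1) has the form $\frac{1-p}{m}J + p I$ where $J$ is the all-ones matrix; its eigenvalues are $1$ (with eigenvector $(1,\ldots,1)^T$) and $p$ (with multiplicity $m-1$), so for $0<p<1$ it is invertible, making the reparameterization a bijection. Because $\mathbb{P}$ is column-stochastic, the linear map sends the simplex $\{\overleftarrow{f}:\sum f_i = 1\}$ onto the simplex $\{\overleftarrow{q}:\sum q_j = 1\}$.

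Next I would maximize the reparameterized likelihood $\sum_j O^*_j \log q_j$ under $\sum_j q_j = 1$, which is the textbook multinomial MLE. A Lagrange multiplier computation (or a direct Gibbs-inequality argument) gives the unique maximizer $\hat{q}_j = O^*_j / |S|$. By the functional invariance of MLE under the invertible map $\overleftarrow{f}\mapsto \mathbb{P}\overleftarrow{f}$, the MLE of $\overleftarrow{f}$ is $\overleftarrow{F'} = \mathbb{P}^{-1}\hat{\overleftarrow{q}} = \mathbb{P}^{-1}\cdot \overleftarrow{O^*}/|S|$.

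Finally I would check that the constraint $\sum F'_i = 1$ is automatically met, so it imposes no additional restriction beyond what the unconstrained reparameterized MLE already satisfies. Since $\mathbb{P}$ is column-stochastic, the row vector $\mathbf{1}^T = (1,\ldots,1)$ is a left eigenvector of $\mathbb{P}$ with eigenvalue $1$, hence also a left eigenvector of $\mathbb{P}^{-1}$. Therefore $\sum_i F'_i = \mathbf{1}^T \mathbb{P}^{-1}\cdot \overleftarrow{O^*}/|S| = \mathbf{1}^T\cdot \overleftarrow{O^*}/|S| = 1$. The main obstacle, if any, is justifying the invariance step cleanly; I would handle it by doing the reparameterization explicitly rather than invoking invariance as a black box, so that the constrained optimization on the $\overleftarrow{q}$-simplex transfers directly back to $\overleftarrow{f}$ via $\mathbb{P}^{-1}$.
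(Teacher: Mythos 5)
The paper does not actually prove this statement: it imports it verbatim as Theorem~2 of \cite{AST05} (``The next theorem follows from Theorem 2 in \cite{AST05}''), so there is no in-paper proof to compare against. Your derivation is the standard argument behind that cited result and is essentially correct: under the paper's stated assumption that each record's $SA$ value is an independent draw from a fixed distribution $\overleftarrow{f}$, the perturbed counts $\overleftarrow{O^*}$ are indeed multinomial with cell probabilities $\overleftarrow{q}=\mathbb{P}\overleftarrow{f}$; the unconstrained-on-the-simplex maximizer is $\hat q_j=O^*_j/|S|$; $\mathbb{P}=pI+\frac{1-p}{m}J$ has eigenvalues $1$ and $p$ and is invertible; and the left-eigenvector argument correctly shows $\Sigma F'=1$ comes for free. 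One phrasing you should fix: $\mathbb{P}$ maps the simplex \emph{into}, not \emph{onto}, the simplex (it contracts toward the barycenter by the factor $p$), so $\hat{\overleftarrow{q}}$ may lie outside $\mathbb{P}(\Delta)$ and $\overleftarrow{F'}=\mathbb{P}^{-1}\hat{\overleftarrow{q}}$ can have negative components. Your argument survives because the theorem's constraint is only the affine condition $\Sigma F'=1$ (not nonnegativity), and $\mathbb{P}$ is a bijection between the hyperplanes $\{\Sigma f=1\}$ and $\{\Sigma q=1\}$; you should carry out the reparameterization over that hyperplane rather than over the simplex, or else you would owe an additional (and false) claim that the maximizer stays in $\mathbb{P}(\Delta)$. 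A second, smaller point: the theorem speaks of the frequency of $x$ ``in $S$'' (the empirical frequency), under which reading the conditional law of $\overleftarrow{O^*}$ given the actual records is a convolution of non-identical categorical variables rather than a single multinomial; the i.i.d.\ multinomial likelihood you write down is the one used in \cite{AST05} and implicitly adopted by the paper, but it is worth stating explicitly which model you are maximizing over.
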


In the rest of the paper, $\overleftarrow{F'}$ denotes the MLE computed by Theorem \ref{MLE}. The presence of  the matrix inversion $\mathbb{P}^{-1}$ makes it troublesome to compute $\overleftarrow{F'}$ and develop a probabilistic error bound for $\overleftarrow{F'}$. The next lemma gives an efficient computation of $\overleftarrow{F'}$.

\begin{lemma}[Computing $\overleftarrow{F'}$] \label{f'}
For any subset $S$ of $D$ and any $SA$ value $x$, (i) $E[O^*]=|S|(fp+(1-p)/m)$, (ii) $F'=\frac{O^*/|S| - (1-p)/m}{p}$, and
(iii) $E[F']=f$.
\end{lemma}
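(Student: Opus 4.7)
The plan is to handle the three claims in order, exploiting the special rank-one-plus-diagonal structure of the perturbation matrix $\mathbb{P}$ and the MLE characterization from Theorem~\ref{MLE}.

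For (i), I would use linearity of expectation applied to $O^*$, the count of records in $S^*$ taking the value $x$. Partition $S$ into the $|S|f$ records whose original $SA$ value is $x$ and the $|S|(1-f)$ records whose original value is some $x_j \neq x$. A record of the first type is observed as $x$ with probability $p + (1-p)/m$ (retained, or replaced by $x$), while a record of the second type is observed as $x$ with probability $(1-p)/m$ (replaced by $x$). Summing the two contributions and simplifying collapses the $(1-p)/m$ terms cleanly into $|S|\bigl(fp + (1-p)/m\bigr)$.

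For (ii), the idea is to make the matrix inversion in Theorem~\ref{MLE} explicit. Write $\mathbb{P} = p I + \frac{1-p}{m} J$, where $J$ is the $m \times m$ all-ones matrix. Because $J^2 = mJ$, one can verify directly that
\[
\mathbb{P}^{-1} \;=\; \frac{1}{p}\!\left(I \;-\; \frac{1-p}{m}\,J\right),
\]
using the identity $p + m \cdot (1-p)/m = 1$ to clear the cross-terms (equivalently, Sherman--Morrison applied to a rank-one update). Substituting into $\overleftarrow{F'} = \mathbb{P}^{-1}\overleftarrow{O^*}/|S|$ and taking the component for $x$, the $J$ term contributes $\frac{1-p}{m}\cdot \frac{1}{|S|}\sum_j O^*_j$, and since the perturbed counts over all $m$ domain values sum to $|S|$, that whole term reduces to $(1-p)/m$. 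This delivers the stated formula $F' = \bigl(O^*/|S| - (1-p)/m\bigr)/p$.

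Finally (iii) is an immediate corollary: substitute part (i) into part (ii) and apply linearity of expectation, so
\[
E[F'] \;=\; \frac{E[O^*]/|S| - (1-p)/m}{p} \;=\; \frac{fp + (1-p)/m - (1-p)/m}{p} \;=\; f.
\]
The main obstacle is the inversion in (ii); once the $pI + qJ$ structure (with $q = (1-p)/m$) is recognized and the identity $p + mq = 1$ is used, the algebra is essentially forced, and (i) and (iii) become bookkeeping around that computation.
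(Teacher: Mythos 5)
Your proof is correct, and parts (i) and (iii) coincide with the paper's argument: the paper likewise decomposes $O^*$ as a sum of indicator variables split by whether the original record carries $x$, and likewise obtains (iii) by substituting (i) into (ii). The only divergence is in (ii). The paper never inverts $\mathbb{P}$: it multiplies $\overleftarrow{F'}=\mathbb{P}^{-1}\overleftarrow{O^*}/|S|$ back through by $\mathbb{P}$ to get $\overleftarrow{O^*}/|S|=p\overleftarrow{F'}+[\tfrac{1-p}{m}\Sigma F']_m$ and then invokes the MLE normalization $\Sigma F'=1$ from Theorem~\ref{MLE} to collapse the second term. You instead compute $\mathbb{P}^{-1}=\tfrac{1}{p}(I-\tfrac{1-p}{m}J)$ explicitly from the $pI+qJ$ structure (which checks out: the $J$ coefficient $-pq+q-mq^2=q(1-p-mq)$ vanishes since $p+mq=1$) and then use the fact that the observed counts sum to $|S|$ rather than the constraint on $F'$. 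These two closing identities are equivalent here because $\mathbb{P}$ is column-stochastic, so nothing is lost either way. Your route has the small advantage of exhibiting the inverse in closed form, which makes the claim that no numerical inversion is needed entirely transparent; the paper's route is slightly leaner and leans on the stated constraint of Theorem~\ref{MLE} instead of a property of the data.
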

\begin{proof}
(i) Let $X_k$ be independent and identically distributed (i.i.d.)
indicator variables for the event that the $k$-th row in $S^*$ has the $SA$ value
$x$. $O^*=\Sigma_k X_k$. From the matrix $\mathbb{P}$ in Equation (\ref{E1}), if the $k$-th row in $S$ has $x$,
$X_k=1$ with probability $p+(1-p)/m$, and if the
$k$-th row in $S$ does not have $x$, $X_k=1$ with probability $(1-p)/m$. So $E[O^*]=|S|f(p+(1-p)/m)+|S^*|(1-f)(1-p)/m)=|S|(fp+(1-p)/m)$.
This shows (i).

(ii) From Theorem \ref{MLE}, $\overleftarrow{F'}=\mathbb{P}^{-1}\cdot \frac{\overleftarrow{O^*}}{|S|}$. Let $[\alpha]_m$ denote a column-vector of the constant $\alpha$ of the length $m$. We have
\[ \frac{\overleftarrow{O^*}}{|S|} =  \mathbb{P}\cdot \overleftarrow{F'} = p\overleftarrow{F'} + [\frac{1-p}{m} \Sigma F']_m  = p\overleftarrow{F'} + [\frac{1-p}{m}]_m  \]
The last equation holds because $\sum F'=1$ (Theorem \ref{MLE}). Thus, $\frac{O^*}{|S|} =  p F' + \frac{1-p}{m}$, equivalently, $F'=\frac{O^*/|S| - (1-p)/m}{p}$, as required for (ii).

(iii) Taking the mean on both sides of $F'=\frac{O^*/|S| - (1-p)/m}{p}$, we get
$E[F']=\frac{E[O^*]/|S| - (1-p)/m}{p}$. Substituting $E[O^*]$ in (i) into the last equation and simplifying, we get $E[F']=f$. This shows
(iii).
\end{proof}

From Lemma \ref{f'}(ii), $F'$ can be computed directly from the observed count $O^*$ without computing the matrix inversion $\mathbb{P}^{-1}$. From Lemma \ref{f'}(iii), $F'$ is an unbiased estimator of $f$.
The next lemma shows that, for the MLE model of local estimates, all global estimates are equal to the local estimate.

\begin{lemma}\label{global}
For any subset $S$ of $D$ and any $SA$ value $x$, every global estimate for $x$ wrt $S$ is equal to the MLE for $x$ wrt $S$.
\end{lemma}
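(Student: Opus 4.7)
The plan is to reduce the global estimate to the closed-form expression for the local MLE given by Lemma \ref{f'}(ii), and then exploit the fact that the observed count $O^*$ is additive over disjoint unions of records. Since $Y = X - S$ with $S \subseteq X$, the sets $S$ and $Y$ partition $X$, so $O^*_X = O^*_S + O^*_Y$ and $|X| = |S| + |Y|$.

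First I would apply Lemma \ref{f'}(ii) to the two local estimates appearing in the definition of the global estimate, writing
\[
F'_X = \frac{O^*_X/|X| - (1-p)/m}{p}, \qquad F'_Y = \frac{O^*_Y/|Y| - (1-p)/m}{p}.
\]
Multiplying through by $|X|$ and $|Y|$ respectively clears the normalization, giving
\[
F'_X |X| = \frac{O^*_X - |X|(1-p)/m}{p}, \qquad F'_Y |Y| = \frac{O^*_Y - |Y|(1-p)/m}{p}.
\]

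Next I would subtract and apply $O^*_X - O^*_Y = O^*_S$ together with $|X| - |Y| = |S|$, which collapses the numerator to $O^*_S - |S|(1-p)/m$. Dividing by $|S|$ then yields
\[
\frac{F'_X |X| - F'_Y |Y|}{|S|} = \frac{O^*_S/|S| - (1-p)/m}{p},
\]
and the right-hand side is exactly the local MLE for $x$ wrt $S$ by Lemma \ref{f'}(ii). Since the choice of $X$ (and hence $Y$) was arbitrary subject to $S \subseteq X$, every global estimate coincides with the local one.

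The argument is essentially a one-line algebraic identity, so there is no real obstacle; the only subtlety is recognizing that because $O^*$ for a set of records is a sum of per-record indicators (as used in the proof of Lemma \ref{f'}(i)) and because the perturbation of each record is independent of the others, the observed count is genuinely additive across the partition $X = S \sqcup Y$. That record-wise independence was already noted earlier in the paper, so it can be invoked without further justification.
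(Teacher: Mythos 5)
Your proposal is correct and follows essentially the same route as the paper's own proof: apply Lemma \ref{f'}(ii) to $F'_X$ and $F'_Y$, substitute into $\frac{F'_X|X|-F'_Y|Y|}{|S|}$, and use $O^*_X-O^*_Y=O^*$ and $|X|-|Y|=|S|$ to recover the local MLE formula. The extra remark about additivity of the observed count over the partition $X=S\sqcup Y$ is a welcome justification of a step the paper takes for granted, but it is not a different argument.
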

\begin{proof}
From Definition \ref{estimate}, every global estimate wrt $S$ has the form $\frac{F'_X|X|-F'_Y|Y|}{|S|}$, where $S \subseteq X \subseteq D$ and
$Y =X - S$, and $F',F'_X,F'_Y$ are the MLEs wrt
$S,X,Y$, respectively. Let $S^*,X^*,Y^*$ be the sets of records in $D^*$ corresponding to $S,X,Y$, and let $O^*,O^*_X,O^*_Y$ be the variables for the counts of $x$ in $S^*,X^*,Y^*$, respectively.
From Lemma \ref{f'}(ii), $F'_X=\frac{O^*_X/|X| -(1-p)/m}{p}$ and $F'_Y=\frac{O^*_Y/|Y| -(1-p)/m}{p}$.
Substituting these into $\frac{F'_X|X|-F'_Y|Y|}{|S|}$, noting $|S|=|X|-|Y|$ and $O^*=O^*_X-O^*_Y$, we get $\frac{O^*/|S|-(1-p)/m}{p}$, which is equal to the MLE $F'$ given by Lemma \ref{f'}(ii). This shows that every global estimate for $x$ is equal to the MLE $F'$ for $x$. \end{proof}

Consequently, it suffices to consider only local estimates. The next definition refines Definition \ref{privacy} by considering only local estimates and will be used in the remaining discussion about $(\varepsilon, \delta)$-reconstruction-privacy.

\begin{definition}[$(\varepsilon, \delta)$-reconstruction-privacy (Refined)]\label{refined-privacy}
For any micro group $g$, $g^*$ is $(\varepsilon, \delta)$-reconstruction-private, where $\varepsilon\geq 0$ and $\delta \in [0,1]$, if for each $SA$ value $x$ occurring in $g$, whenever $\Pr\left[\frac{F'-f}{f}>
\varepsilon\right]< U$ or $\Pr\left[\frac{F'-f}{f}< -\varepsilon\right]< L$, then
$\delta\leq min\{U,L\}$, where $f$ is the frequency of $x$ in $g$
and $F'$ is the variable for the MLE of $f$ under the constraint $\Sigma F'=1$.
\end{definition}

\subsection{Probabilistic Error Bounds}
A remaining question is how to bound $\Pr\left[\frac{F'-f}{f}> \varepsilon\right]$ and $\Pr\left[\frac{F'-f}{f}< -\varepsilon\right]$. We leverage tail probabilities of random variables in the literature to develop such bounds. Recall that $O^*$ is the observed count of a $SA$ value and $F'$ is the reconstructed frequency of a $SA$ value. The next theorem gives a conversion between a probabilistic bound for $F'$ and a probabilistic bound for $O^*$.

\begin{theorem}[Bound Conversion]\label{UL-bound}
Consider any subset $S$ of $D$ and any $SA$ value $x$. Let $\mu=E[O^*]$. For any upper tail bound function $U(\theta,\mu)$ and lower tail bound function $L(\theta,\mu)$, and for any comparison operator $\bigoplus$ (i.e., $<$ or $>$),
\begin{enumerate}
\item $\Pr\left[\frac{O^*-\mu}{\mu}> \theta\right] \bigoplus U(\theta,\mu)$ if and only if $\Pr\left[\frac{F'-f}{f}> \varepsilon\right]$ $\bigoplus  U(\frac{\varepsilon |S|pf}{\mu},\mu)$;
\item $\Pr\left[\frac{O^*-\mu}{\mu}< -\theta\right] \bigoplus L(\theta,\mu)$ if and only if $\Pr\left[\frac{F'-f}{f} <
-\varepsilon\right]$ $\bigoplus L(\frac{\varepsilon |S|pf}{\mu},\mu)$.
\end{enumerate}
\end{theorem}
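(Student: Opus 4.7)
The plan is to reduce both inequalities about $F'$ to algebraically equivalent inequalities about $O^*$, by exploiting the \emph{linear} relationship between $F'$ and $O^*$ established in Lemma \ref{f'}. Once I show that the events $\{\frac{F'-f}{f} > \varepsilon\}$ and $\{\frac{O^*-\mu}{\mu} > \theta\}$ are literally the same event when $\theta = \frac{\varepsilon|S|pf}{\mu}$, the biconditional with any bound function $U(\theta,\mu)$ (and any comparison operator $\bigoplus$) follows immediately, and symmetrically for the lower tail. So the theorem is really a statement about deterministic algebraic rewriting of events, not about probability bounds per se — the bounds $U$ and $L$ are just opaque functions evaluated at matching arguments on both sides.

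First I would recall from Lemma \ref{f'}(ii) that $F' = \frac{O^*/|S| - (1-p)/m}{p}$, and from Lemma \ref{f'}(iii) that $E[F']=f$. Subtracting, I get $F' - f = \frac{O^*/|S| - (1-p)/m - fp}{p} = \frac{O^* - |S|(fp + (1-p)/m)}{|S|p} = \frac{O^* - \mu}{|S|p}$, using the definition $\mu = E[O^*] = |S|(fp + (1-p)/m)$ from Lemma \ref{f'}(i). Dividing by $f$ yields the key identity
\[
\frac{F' - f}{f} \;=\; \frac{O^* - \mu}{|S|pf}.
\]

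Next I would translate the tail event. Since $|S|pf > 0$ and $\mu > 0$, the inequality $\frac{F'-f}{f} > \varepsilon$ is equivalent to $O^* - \mu > \varepsilon |S| p f$, which in turn (dividing by $\mu>0$) is equivalent to $\frac{O^*-\mu}{\mu} > \frac{\varepsilon|S|pf}{\mu}$. Setting $\theta := \frac{\varepsilon |S|pf}{\mu}$, the two events coincide as subsets of the underlying probability space, hence their probabilities are identical, and for any real-valued function $U(\cdot,\mu)$ and any comparator $\bigoplus \in \{<,>\}$ we have $\Pr[\frac{O^*-\mu}{\mu} > \theta] \bigoplus U(\theta,\mu)$ iff $\Pr[\frac{F'-f}{f} > \varepsilon] \bigoplus U(\frac{\varepsilon|S|pf}{\mu},\mu)$. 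This gives part (1). Part (2) follows by the identical argument after multiplying through by $-1$: the event $\frac{F'-f}{f} < -\varepsilon$ is equivalent to $\frac{O^*-\mu}{\mu} < -\frac{\varepsilon|S|pf}{\mu}$, and substituting $\theta = \frac{\varepsilon|S|pf}{\mu}$ into $L$ completes the conversion.

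There is no real obstacle; the only care needed is to note that the conversion depends on $|S|p > 0$ and $f > 0$ (so the direction of the inequalities is preserved when dividing), and that the statement does not require $U$ or $L$ to actually be valid tail bounds — the theorem is a purely formal equivalence of probability statements, which is precisely what lets the authors later plug in either upper or lower bounds (as noted in Remark \ref{remark1}).
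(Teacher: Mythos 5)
Your proposal is correct and follows essentially the same route as the paper's own proof: both establish the algebraic identity $\frac{F'-f}{f}=\frac{O^*-\mu}{|S|pf}$ from Lemma \ref{f'} and conclude that the two tail events coincide, so the probabilities are equal and the biconditional with the bound functions is immediate. Your explicit remark that positivity of $|S|pf$ and $\mu$ preserves the inequality directions is a small point the paper leaves implicit, but the argument is the same.
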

\begin{proof}
We show (1) only because the proof for (2) is similar. From Lemma \ref{f'}(ii), $F'=\frac{O^*/|S| - (1-p)/m}{p}$, $O^*=|S|(F'p+(1-p)/m)$, and from Lemma \ref{f'}(i),
$\mu=|S|(fp+(1-p)/m)$. So
\begin{eqnarray*}
 \frac{O^*-\mu}{\mu} > \theta & \Leftrightarrow & O^*-\mu > \theta\mu\\
 & \Leftrightarrow & |S|p(F'-f )> \theta\mu\\
 & \Leftrightarrow & \frac{F'-f}{f}> \frac{\theta \mu}{|S|pf} =\varepsilon.
\end{eqnarray*}
These rewriting implies that the probabilities on the two sides of (1) are equal.
Then (1) follows because $\theta=\frac{\varepsilon|S|pf}{\mu}$.
\end{proof}

From Theorem \ref{UL-bound}, if we have a tail probability bound for the error of $O^*$ (i.e., $U(\theta,\mu)$ and $L(\theta,\mu)$), we immediately have a tail probability bound for the error of $F'$ (i.e., $U(\frac{\varepsilon |S|pf}{\mu},\mu)$ and $L(\frac{\varepsilon |S|pf}{\mu},\mu)$). Moreover, if the bound for $O^*$ is the best, the corresponding bound for $F'$ is also the best (otherwise, a better bound for $O^*$ can be obtained from Theorem \ref{UL-bound}). Importantly, the bound conversion does not hinge on the particular form of the bound functions $U$ and $L$. This generality allows us to adapt to the best bounds $U$ and $L$ available for $O^*$ to get the best bounds for $F'$.




There is a rich literature on the upper bounds for tail probabilities of random variables.
The Markov's inequality applies to any non-negative random variable, therefore, applies to $O^*$. The Chebyshev's inequality uses knowledge of the standard deviation to give a tighter bound. However, these bounds are very poor for random variables that fall off exponentially with distance from the mean. The Chernoff bound, due to \cite{Chernoff}, gives exponential
fall-off of probability with distance from the mean. The critical
condition that is needed for the Chernoff bound is that the random
variable be a sum of independent Poisson trials.

\begin{theorem}[Chernoff Bounds, \cite{Chernoff,MR95}] \label{Chernoff}
Let $X_1,\cdots,X_n$ be independent Poisson trials such that for $1\leq i\leq n$, $X_i \in
\{0,1\}$, $\Pr[X_i=1]=p_i$, where $0<p_i< 1$. Let $X=X_1+\cdots +X_n$ and $\mu =
E[X] = E[X_1] +\cdots + E[X_n]$.
For $\theta \in (0,\infty)$,
\begin{eqnarray}
\Pr\left[\frac{X-\mu}{\mu}> \theta\right]< U _1(\theta,\mu) = \left(\frac{e^{\theta}}{(1+\theta)^{(1+\theta)}}\right)^{\mu}
\end{eqnarray}
and for $\theta \in (0,1]$,
\begin{eqnarray}
\Pr\left[\frac{X-\mu}{\mu}< -\theta\right] < L_1(\theta,\mu) = \left(\frac{e^{-\theta}}{(1-\theta)^{(1-\theta)}}\right)^{\mu}
\end{eqnarray}
\end{theorem}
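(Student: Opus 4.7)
The plan is to prove both inequalities by the standard exponential moment (Chernoff) method, which applies Markov's inequality to an appropriate exponential transform of $X$ and then optimizes the free parameter.

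First I would handle the upper tail. For any $t>0$, Markov's inequality gives
\[
\Pr[X > (1+\theta)\mu] = \Pr[e^{tX} > e^{t(1+\theta)\mu}] \leq \frac{E[e^{tX}]}{e^{t(1+\theta)\mu}}.
\]
By independence of the $X_i$, the moment generating function factorizes as $E[e^{tX}] = \prod_{i=1}^n E[e^{tX_i}]$, and since $X_i \in \{0,1\}$ with $\Pr[X_i=1]=p_i$, each factor equals $1+p_i(e^t-1)$. Using the elementary inequality $1+x \leq e^x$, I get $E[e^{tX_i}] \leq e^{p_i(e^t-1)}$, so $E[e^{tX}] \leq e^{\mu(e^t-1)}$, where $\mu=\sum_i p_i$. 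Substituting back yields
\[
\Pr[X > (1+\theta)\mu] \leq \exp\bigl(\mu(e^t-1) - t(1+\theta)\mu\bigr).
\]
Minimizing the right-hand side over $t>0$ by differentiating the exponent gives the optimal choice $t = \ln(1+\theta)$, which is positive since $\theta>0$. Plugging this in produces exactly $\bigl(e^{\theta}/(1+\theta)^{1+\theta}\bigr)^{\mu}$, which is inequality (2).

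For the lower tail, I would run the parallel argument with $t<0$. For $t<0$, Markov's inequality applied to $e^{tX}$ reverses the event direction:
\[
\Pr[X < (1-\theta)\mu] = \Pr[e^{tX} > e^{t(1-\theta)\mu}] \leq \frac{E[e^{tX}]}{e^{t(1-\theta)\mu}} \leq \exp\bigl(\mu(e^t-1) - t(1-\theta)\mu\bigr),
\]
using the same bound $E[e^{tX}] \leq e^{\mu(e^t-1)}$, which is valid for all real $t$. Differentiating the exponent gives the optimizer $t = \ln(1-\theta)$, which lies in $(-\infty,0)$ exactly because $\theta \in (0,1]$ (the endpoint $\theta=1$ is handled by the convention $0^0=1$, and makes the bound trivially $e^{-\mu}$). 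Substituting yields $\bigl(e^{-\theta}/(1-\theta)^{1-\theta}\bigr)^{\mu}$, which is inequality (3).

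The only mildly subtle step is the optimization: one must verify that the optimizing $t$ lies on the correct side of $0$ (positive for the upper tail, negative for the lower), and for the lower tail one must ensure $1-\theta > 0$ so that $\ln(1-\theta)$ is defined, which is why the restriction $\theta \in (0,1]$ appears. Everything else is routine algebra with the MGF factorization and the $1+x \leq e^x$ inequality, so I do not expect any real obstacle beyond bookkeeping.
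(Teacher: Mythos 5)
Your argument is the standard exponential-moment (Chernoff) proof and it is correct; the paper itself gives no proof of this theorem, citing it as a classical result from \cite{Chernoff,MR95}, and your derivation is exactly the one found in that reference. The only cosmetic point is that the theorem asserts strict inequality $<$ while your chain uses $\leq$; strictness follows because $1+x<e^{x}$ holds strictly for $x\neq 0$, and each $p_i(e^{t}-1)$ is nonzero since $0<p_i<1$ and $t\neq 0$.
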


These full Chernoff bounds are quite tight but can be clumsy to compute. Using the Taylor series expansion $\ln (1 + \theta) = \sum_{i\geq 1} (-1)^{i+1} \frac{\theta^i}{i}$ and ignoring higher order terms, the above bounds can be simplified to the following weaker bounds, which covers 95\% of cases pretty well: For $\theta \in (0,\infty)$,
\begin{eqnarray}
\Pr\left[\frac{X-\mu}{\mu}> \theta\right] < U_2(\theta,\mu) = exp(-\frac{\theta^2}{2+\theta} \mu)
\end{eqnarray}
and for $\theta \in (0,1]$,
\begin{eqnarray}
\Pr\left[\frac{X-\mu}{\mu} < -\theta\right] < L_2(\theta,\mu) = exp(-\frac{\theta^2}{2} \mu).
\end{eqnarray}

The Chernoff bound applies to our variable $O^*$ because $O^*$ is the sum $X_1+\cdots +X_n$, where each $X_i$ is the indicator variable whether the $i$-th row in $S^*$ has a particular $SA$ value $x$, and $E[O^*]=|S|(fp+(1-p)/m)$ (Lemma \ref{f'}). Instantiating the upper bounds $U_i$ and $L_i$ for $O^*$ in Equations (2)-(5) into Theorem \ref{UL-bound}, the next corollary gives the corresponding upper bounds for $F'$.

\begin{corollary}[Upper bounds for $F'$] \label{delta}
Let $U_i$ and $L_i$ be defined in Equations (2)-(5). For $\theta \in (0,\infty)$,
\begin{eqnarray}
\Pr\left[\frac{F'-f}{f} >
\varepsilon\right] < U_i(\theta,\mu)
\end{eqnarray}
and for $\theta \in (0,1]$,
\begin{eqnarray}
\Pr\left[\frac{F'-f}{f}  < -\varepsilon\right] < L_i(\theta,\mu)
\end{eqnarray}
where $\theta=\frac{\varepsilon |S|pf}{\mu}$ and $\mu=|S|(fp+(1-p)/m)$.
\end{corollary}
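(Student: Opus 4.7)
The plan is to derive the corollary by composing two results that are already in place: instantiate the Chernoff bounds of Theorem \ref{Chernoff} for the observed count $O^*$, then transport those tail bounds to the estimator $F'$ using the Bound Conversion Theorem \ref{UL-bound}. No new probabilistic estimate is required; the work is entirely in checking that the hypotheses line up and in reading off the substitution $\theta = \varepsilon |S| p f / \mu$.

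First I would verify the Chernoff hypotheses for $O^*$. Writing $O^* = \sum_{k=1}^{|S|} X_k$, where $X_k$ is the indicator that the $k$-th perturbed row in $S^*$ carries the value $x$, the record-independence of the perturbation (Section 3.1) makes the $X_k$ independent, and the perturbation matrix (Equation (1)) forces each $\Pr[X_k=1]$ to be either $p + (1-p)/m$ or $(1-p)/m$, both strictly in $(0,1)$ whenever $0 < p < 1$. Lemma \ref{f'}(i) already identifies $\mu = E[O^*] = |S|(fp + (1-p)/m)$. Theorem \ref{Chernoff} then yields $\Pr\!\left[(O^*-\mu)/\mu > \theta\right] < U_i(\theta,\mu)$ for $\theta \in (0,\infty)$ and $\Pr\!\left[(O^*-\mu)/\mu < -\theta\right] < L_i(\theta,\mu)$ for $\theta \in (0,1]$, with $i \in \{1,2\}$ covering both the full bounds (Equations (2), (3)) and the simplified exponential bounds (Equations (4), (5)). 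Next I would invoke Theorem \ref{UL-bound} with $\bigoplus$ taken as $<$ and the tail-bound functions set to $U_i$ and $L_i$: part (1) converts the upper Chernoff tail on $O^*$ into $\Pr\!\left[(F'-f)/f > \varepsilon\right] < U_i(\varepsilon |S| p f / \mu, \mu)$, and part (2) does the same on the lower side. Setting $\theta = \varepsilon |S| p f / \mu$ identifies these with the corollary's right-hand sides.

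The only item deserving attention — and the place I would flag as the main (albeit minor) obstacle — is the matching of parameter ranges. The lower Chernoff bound demands $\theta \in (0,1]$, and the substituted value $\theta = \varepsilon |S| p f / \mu = \varepsilon p f / (pf + (1-p)/m)$ lies in $(0, \varepsilon)$, so it satisfies the restriction exactly in the range stated in the corollary's lower-tail clause; the upper-tail clause is unrestricted, in agreement with the $(0,\infty)$ range of Theorem \ref{Chernoff}. Everything else is a direct substitution, and no additional calculation is needed beyond what Lemma \ref{f'}, Theorem \ref{UL-bound}, and Theorem \ref{Chernoff} already supply.
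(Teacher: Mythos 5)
Your proposal is correct and follows essentially the same route as the paper: the paper likewise justifies the corollary by noting that $O^*$ is a sum of independent indicator variables (so Theorem \ref{Chernoff} applies with $\mu = E[O^*]$ from Lemma \ref{f'}(i)) and then instantiating the resulting $U_i$ and $L_i$ into the Bound Conversion Theorem \ref{UL-bound} with $\theta = \varepsilon|S|pf/\mu$. Your additional check that $\theta = \varepsilon pf/(fp+(1-p)/m) < \varepsilon$, so the lower-tail range condition is satisfiable, is a detail the paper defers to Corollary \ref{test} but is consistent with it.
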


Corollary \ref{delta} gives the concrete upper bounds $U_i$ and $L_i$ on the tail probabilities of $F'$ based on the Chernoff bound. Since these bounds are public, $(\varepsilon,\delta)$-reconstruction-privacy implies $\delta \leq min\{U_i,L_i\}$. The question is whether $\delta \leq min\{U_i,L_i\}$ is sufficient for $(\varepsilon,\delta)$-reconstruction-privacy, in other words, whether there are tighter (i.e., smaller) upper bounds than $U_i$ and $L_i$. To answer this question, we observe from Theorem \ref{UL-bound} that any tighter bound for $F'$ would lead to a tighter bound than the Chernoff bound for $O^*$. The fact that the Chernoff bound has been used as the state-of-the-art technique in the past 60 years suggests that it is nontrivial to improve the Chernoff bound. For this reason, we assume that Corollary \ref{delta} gives the best upper bounds for $F'$; however, if better bounds on random variables become available, they can be easily adapted through Theorem \ref{UL-bound} to obtain better bounds for $F'$. This observation leads to the following instantiation of $(\varepsilon,\delta)$-reconstruction-privacy based on the Chernoff bound.

\begin{corollary}[Testing $(\varepsilon,\delta)$-reconstruction-privacy] \label{test}
With the upper bounds $U_i$ and $L_i$ in Equations (2)-(5), for a micro group $g$,
for $\varepsilon \in (0,1+\frac{(1-p)/m}{pf}]$ and $\delta \in [0,1]$, $g^*$ is $(\varepsilon,\delta)$-reconstruction-private if and only if, for every $SA$ value in $g$,
\begin{equation}
\delta\leq min\{U_i(\theta,\mu),L_i(\theta,\mu)\}
\end{equation}
where $\theta=\frac{\varepsilon |g|pf}{\mu}$ and $\mu=|g|(fp+(1-p)/m)$.
\end{corollary}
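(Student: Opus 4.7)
The plan is to prove both directions of the equivalence by reducing $(\varepsilon,\delta)$-reconstruction-privacy (Definition~\ref{refined-privacy}) to the concrete Chernoff-based upper bounds $U_i(\theta,\mu)$ and $L_i(\theta,\mu)$ delivered by Corollary~\ref{delta}. Before either direction, I would first verify that the range constraint $\varepsilon \in (0,\,1+\tfrac{(1-p)/m}{pf}]$ is exactly the condition that makes both Chernoff tails applicable: substituting $\mu=|g|(fp+(1-p)/m)$ into $\theta=\varepsilon|g|pf/\mu$, the requirement $\theta\in(0,1]$ demanded by Equations~(3) and~(5) collapses algebraically to $\varepsilon pf\leq pf+(1-p)/m$, i.e., the stated range. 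This routine check ensures Corollary~\ref{delta} genuinely furnishes valid upper bounds on both tails of $F'$ throughout the domain of $\varepsilon$ in the statement.

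For the ($\Rightarrow$) direction, I would simply instantiate Definition~\ref{refined-privacy} with the particular pair $U=U_i(\theta,\mu)$ and $L=L_i(\theta,\mu)$. Corollary~\ref{delta} certifies that these are valid upper bounds on $\Pr[(F'-f)/f>\varepsilon]$ and $\Pr[(F'-f)/f<-\varepsilon]$ respectively, so the ``whenever'' clause of the definition fires, forcing $\delta\leq\min\{U_i(\theta,\mu),L_i(\theta,\mu)\}$ for every $SA$ value $x$ occurring in $g$. This direction is essentially bookkeeping.

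For the ($\Leftarrow$) direction, assume $\delta\leq\min\{U_i(\theta,\mu),L_i(\theta,\mu)\}$ for every $x$ in $g$. Let $U,L$ be any competing upper bounds with $\Pr[(F'-f)/f>\varepsilon]<U$ and $\Pr[(F'-f)/f<-\varepsilon]<L$. I would invoke the standing assumption articulated in the paragraph preceding the corollary, namely that $U_i$ and $L_i$ are the tightest currently available upper bounds on the respective tails of $F'$, so that $U_i\leq U$ and $L_i\leq L$. Chaining yields $\delta\leq\min\{U_i,L_i\}\leq\min\{U,L\}$, which matches Definition~\ref{refined-privacy} and establishes $(\varepsilon,\delta)$-reconstruction-privacy.

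The main obstacle is the legitimacy of the tightness step used in the reverse direction; it is not a genuine theorem but rests on the state-of-the-art status of the Chernoff bound. What makes even the informal claim defensible is Theorem~\ref{UL-bound}: any strictly tighter bound on $F'$ would, through the conversion, induce a strictly tighter bound on the Poisson sum $O^*$, contradicting six decades of failed attempts to improve Chernoff in this regime. I would therefore flag this explicitly in the write-up, phrasing the corollary as a practical, Chernoff-contingent instantiation of the privacy definition rather than as an unconditional mathematical equivalence, and noting that any future improvement to upper bounds on Poisson tails can be plugged in verbatim via Theorem~\ref{UL-bound} without altering the surrounding argument.
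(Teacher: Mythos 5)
Your proposal is correct and follows essentially the same route as the paper: the forward direction instantiates Definition~\ref{refined-privacy} with the concrete bounds from Corollary~\ref{delta}, the reverse direction rests on the standing assumption that the Chernoff-derived $U_i,L_i$ are the tightest available upper bounds (justified, as you note, via Theorem~\ref{UL-bound}), and the range check $\varepsilon\in(0,1+\frac{(1-p)/m}{pf}]\Leftrightarrow\theta\in(0,1]$ matches the paper's remark following the corollary. If anything, you are more candid than the paper in flagging that the reverse direction is a Chernoff-contingent assumption rather than a theorem, which is a fair and accurate characterization of the paper's own argument.
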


The range $(0,1+\frac{(1-p)/m}{pf}]$ of $\varepsilon$ corresponds to the common range $(0,1]$ of $\theta$ for all of Equations (2)-(5). The condition in Equation (8) can be tested efficiently because all parameters in $\theta$ and $\mu$ are known to the data publisher.

\section{Achieving Privacy}
We now consider the second major question: how to achieve $(\varepsilon,\delta)$-reconstruction-privacy on the
published data $D^*$ for a given data set $D$. Corollary \ref{test} gives an efficient condition for $(\varepsilon,\delta)$-reconstruction-privacy, but it does not provide a clue on how to achieve this condition if it fails. As the first step towards an answer, we rewrite Equation (8) into a constraint on the size $|g|$ of a micro group $g$. Then we present an algorithm to enforce this constraint. Below, we consider $(L_1,U_1)$ and $(L_2,U_2)$ separately.

\begin{theorem} \label{simplified1}
With the upper bounds $U_1(\theta,\mu)$ and $L_1(\theta,\mu)$ in Equations (2) and (3), for a micro group $g$, $\varepsilon \in (0,1+\frac{(1-p)/m}{pf}]$, and $\delta \in [0,1]$, $g^*$ is $(\varepsilon,\delta)$-reconstruction-private if and only if, for the maximum frequency $f$ of any $SA$ value occurring in $g$,
\begin{equation}
|g| \leq  \frac{\ln \delta}{w \ln \left(\frac{e^{-\theta}}{(1-\theta)^{(1-\theta)}}\right)}
\end{equation}
where $w=fp+(1-p)/m$ and $\theta=\frac{\varepsilon pf}{w}$.
\end{theorem}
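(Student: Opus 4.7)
The plan is to start from Corollary \ref{test}, which gives that $g^*$ is $(\varepsilon,\delta)$-reconstruction-private if and only if $\delta \leq \min\{U_1(\theta,\mu), L_1(\theta,\mu)\}$ for every $SA$ value occurring in $g$. Three things then need doing: eliminate the $\min$, solve the resulting scalar inequality for $|g|$, and collapse the quantifier over $SA$ values into a single check at the maximum frequency.

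First, in the stated range the restriction $\varepsilon \leq 1+(1-p)/(mpf)$ is exactly $\theta = \varepsilon pf/w \leq 1$, so both Chernoff branches apply. I would show $L_1 \leq U_1$ on $\theta \in (0,1]$ by computing $\ln U_1 - \ln L_1 = \mu[\,2\theta - (1+\theta)\ln(1+\theta) + (1-\theta)\ln(1-\theta)\,]$, noting that it vanishes at $\theta=0$, and checking that its derivative $\mu[\,2 - \ln(1-\theta^2)\,]$ is strictly positive on $(0,1]$. This collapses the privacy test to the single inequality $\delta \leq (e^{-\theta}/(1-\theta)^{1-\theta})^\mu$. Substituting $\mu = |g|w$ and taking logarithms produces $\ln\delta \leq |g|\,w \cdot \ln(e^{-\theta}/(1-\theta)^{1-\theta})$; since both sides are negative, dividing flips the sense exactly once and yields precisely the bound asserted by the theorem. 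Every step here is reversible, so the ``iff'' survives intact.

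The third piece is a short monotonicity argument that lets me quantify over a single $f$. Both $w$ and $\theta$ are increasing in $f$ (the latter by the quotient rule, with derivative $\varepsilon p(1-p)/(mw^2)>0$), and $\ln(e^{-\theta}/(1-\theta)^{1-\theta})$ is decreasing in $\theta$ with derivative $\ln(1-\theta)<0$, so $w\ln(\cdots)$ grows more negative as $f$ increases and the RHS of the $|g|$-bound shrinks. The binding constraint therefore sits at the maximum frequency, and enforcing the bound there implies it for every other $SA$ value in $g$. The only real pitfall is sign bookkeeping --- $\ln\delta$, $\ln(e^{-\theta}/(1-\theta)^{1-\theta})$, and $\theta-1$ all live on the ``wrong'' side of zero --- so I would write the log step out with care, but no machinery beyond Corollary \ref{test} and the closed-form Chernoff expressions is needed.
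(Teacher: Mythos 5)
Your proposal follows essentially the same route as the paper's proof: establish $\min\{U_1,L_1\}=L_1$ by differentiating the log-ratio, reduce the test to $\delta\leq L_1(\theta,\mu)^{}=\bigl(e^{-\theta}/(1-\theta)^{(1-\theta)}\bigr)^{|g|w}$ and solve for $|g|$ using that $\ln\bigl(e^{-\theta}/(1-\theta)^{(1-\theta)}\bigr)<0$, then use monotonicity in $f$ to localize the check at the maximum frequency. One small slip: the derivative of $2\theta-(1+\theta)\ln(1+\theta)+(1-\theta)\ln(1-\theta)$ is $-\ln(1-\theta^2)$, not $2-\ln(1-\theta^2)$ (the constant $2$ cancels against the two $-1$ terms), but since both expressions are positive on $(0,1]$ your conclusion is unaffected.
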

\begin{proof}
First, we show two claims. Let $X=\frac{e^{\theta}}{(1+\theta)^{(1+\theta)}}$ and $Y=\frac{e^{-\theta}}{(1-\theta)^{(1-\theta)}}$.

\emph{Claim 1: for $\theta \in (0,1]$, $X\geq Y$, thus, $min\{L_1,U_1\}=L_1$}. Note $\frac{X}{Y}$ approaches 1 as $\theta$ approaches 0. To show the claim, it suffices to show that $\frac{X}{Y}$ is non-decreasing, equivalently, the derivative of $\frac{X}{Y}$ wrt $\theta$ is non-negative for $\theta \in (0,1]$. Note
\[ \ln \frac{X}{Y}=2\theta +(1-\theta) \ln (1-\theta) -(1+\theta) \ln (1+\theta)\]
Differentiating both sides wrt $\theta$, we get
\[ \frac{Y}{X} (\frac{X}{Y})' =2 + [-\ln (1-\theta)+(1-\theta)\frac{-1}{1-\theta}] - [\ln (1+\theta)+(1+\theta)\frac{1}{1+\theta}] \]
and
\[ (\frac{X}{Y})' = -\frac{X}{Y} \ln (1-\theta^2) \geq 0 \]
The last inequality follows because $X$ and $Y$ are non-negative and $\theta$ is in $(0,1]$. This shows Claim 1.

\emph{Claim 2: for $\theta \in (0,1]$, $Y$ is in $(0,1)$ and is non-increasing}.
We show
that the derivative of $Y$ is non-positive (thus, $Y$ is non-increasing) for $\theta \in (0,1]$.  Then the claim follows from the fact that $Y$ approaches 1 as  $\theta$ approaches 0.
\[ \ln Y =-\theta -[(1-\theta) \ln (1-\theta)] \]
Differentiating both sides wrt $\theta$ gives
\[ Y' = Y [-1 -(-\ln (1-\theta)+(1-\theta) \frac{-1}{1-\theta}) = Y \ln (1-\theta) \leq 0\]
The last inequality follows because $Y$ is non-negative and $\theta$ is in $(0,1]$. This shows Claim 2.

From Claim 1, $L_1(\theta,\mu)\leq U_1(\theta,\mu)$, so Equation (8) degenerates into $\delta \leq L_1(\theta,\mu)=Y^{\mu}$, and $\ln \delta \leq \mu \ln Y = |g|w \ln Y$. From Claim 2, $Y$ is in $(0,1)$, so $\ln Y<0$, and $|g|\leq \frac{\ln \delta}{w \ln Y}$.
As $f$ increases, $w$ and $\theta=\frac{\varepsilon p}{p+\frac{1-p}{mf}}$ increase, and from Claim 2, $Y$ is in $(0,1)$ and is non-increasing, thus, $\ln Y$ is decreasing. Since both $\ln \delta$ and $\ln Y$ are negative, $\frac{\ln \delta}{w\ln Y}$ is minimized when $f$ is maximized. So Equation (8) degenerates into Equation (9).
\end{proof}

 Observe that the right-hand side of the condition in Equation (9) is a constant if the maximum frequency $f$ is kept unchanged. The idea of our algorithm to enforce this condition is reducing $|g|$ while keeping $f$ unchanged. The next theorem gives a similar rewriting based on the bounds $L_2$ and $U_2$ in Equations (4) and (5).

\begin{theorem} \label{simplified2}
With the upper bounds $U_2(\theta,\mu)$ and $L_2(\theta,\mu)$ in Equations (4) and (5), for a micro group $g$, $\varepsilon \in (0,1+\frac{(1-p)/m}{pf}]$, and $\delta \in [0,1]$, $g^*$ is $(\varepsilon,\delta)$-reconstruction-private if and only if, for the maximum frequency $f$ of any $SA$ value occurring in $g$,
\begin{equation}
|g|\leq  =\frac{-2\ln \delta}{w\theta^2} \label{size}
\end{equation}
where $w=fp+(1-p)/m$ and $\theta=\frac{\varepsilon pf}{w}$.
\end{theorem}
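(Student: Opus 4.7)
The plan is to mirror the structure of Theorem~\ref{simplified1}, but with the arithmetic simplifying considerably because the weaker bounds $U_2,L_2$ are both plain exponentials. First I would isolate the reduction step: show that the test condition in Equation~(8), namely $\delta \le \min\{U_2(\theta,\mu),L_2(\theta,\mu)\}$, collapses to $\delta \le L_2(\theta,\mu)$ for the relevant range of $\theta$. This is immediate once we observe that $\frac{\theta^2}{2+\theta} < \frac{\theta^2}{2}$ for every $\theta>0$, so $U_2(\theta,\mu) = \exp\bigl(-\tfrac{\theta^2}{2+\theta}\mu\bigr) > \exp\bigl(-\tfrac{\theta^2}{2}\mu\bigr) = L_2(\theta,\mu)$; thus $\min\{U_2,L_2\}=L_2$. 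This replaces the two auxiliary claims (the monotonicity of $X/Y$ and of $Y$) that carried the heavy lifting in Theorem~\ref{simplified1}, so the hard part of the earlier proof is trivial here.

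Next I would take logarithms of $\delta \le L_2(\theta,\mu) = \exp(-\tfrac{\theta^2}{2}\mu)$ to obtain $\ln\delta \le -\tfrac{\theta^2}{2}\mu$. Substituting $\mu = |g|w$ with $w = fp + (1-p)/m$, and noting $\ln\delta \le 0$ together with $\theta,w>0$, the inequality rearranges to
\begin{equation*}
|g| \;\le\; \frac{-2\ln\delta}{w\theta^2},
\end{equation*}
which is exactly Equation~(\ref{size}). The two directions of the ``if and only if'' follow at once because every step is a reversible algebraic manipulation.

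The remaining step is to justify specializing to the maximum frequency $f$ of any $SA$ value occurring in $g$. Writing $\theta = \frac{\varepsilon p f}{w} = \frac{\varepsilon p}{p + (1-p)/(mf)}$, we see that $\theta$ is increasing in $f$, and $w$ is also increasing in $f$. Hence $w\theta^2$ is increasing in $f$, so the upper bound $\frac{-2\ln\delta}{w\theta^2}$ (a positive quantity, since $\ln\delta<0$) is \emph{decreasing} in $f$. Therefore the tightest constraint on $|g|$ comes from the largest $f$ that appears in $g$, and enforcing the inequality at that $f$ simultaneously enforces it for every $SA$ value in $g$, matching the statement of Corollary~\ref{test}.

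I do not anticipate any serious obstacle: the only place where the earlier proof of Theorem~\ref{simplified1} required real work was establishing monotonicity of the ratio and of the lower-tail expression, and for the polynomial-exponential form of $U_2,L_2$ both facts are one-line observations. The proposal is therefore essentially: (i) observe $L_2 \le U_2$; (ii) take logs of $\delta \le L_2$ and rearrange; (iii) argue the worst case is the maximum frequency.
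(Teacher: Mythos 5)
Your proposal is correct and follows essentially the same route as the paper's proof: observe that $L_2\le U_2$ (so Equation (8) reduces to $\delta\le L_2$), take logarithms and rearrange to isolate $|g|$, and use the monotonicity of $\theta$ and $w$ in $f$ to justify checking only the maximum frequency. The only cosmetic difference is that you carry out the worst-case argument on the rearranged bound $\frac{-2\ln\delta}{w\theta^2}$ while the paper argues directly that $L_2(\theta,\mu)$ decreases in $f$; these are the same observation.
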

\begin{proof}
For $\theta \geq 0$, $L_2(\theta,\mu)\leq U_2(\theta,\mu)$, so Equation (8) degenerates into $\delta \leq L_2(\theta,\mu)$, where $\theta=\frac{\varepsilon |g|pf}{\mu}$ and $\mu=|g|w$. Note $\theta=\frac{\varepsilon pf}{w} = \frac{\varepsilon p}{p+\frac{1-p}{mf}}$. As $f$ increases, $\theta$ and $\mu=|g|w$ increase, hence, $L_2(\theta,\mu)=exp(-\frac{\theta^2}{2} \mu)$ decreases.
Therefore, it suffices to consider the maximum frequency $f$ in $g$ for checking
$\delta\leq L_2$. The rest of the proof follows from the following rewriting:
\[
 \delta \leq exp(-\frac{\theta^2}{2}\mu) \Leftrightarrow \mu \leq -\frac{2\ln \delta}{\theta^2}
 \Leftrightarrow  |g|\leq \frac{-2\ln \delta}{w \theta^2} \]
\end{proof}

 In the rest of this section, we develop an algorithm for achieving $(\varepsilon,\delta)$-reconstruction-privacy based on Theorem \ref{simplified2}, but a similar algorithm can be developed based on Theorem \ref{simplified1}. According to Theorem \ref{simplified2}, if $|g|\leq  s_g$ fails, where $s_g=\frac{-2\ln \delta}{w\theta^2}$, $g^*$ is not $(\varepsilon,\delta)$-reconstruction-private. There are several options to restore this inequality. One option is increasing $s_g$ by reducing either the retention probability $p$ or the maximum frequency $f$ in $g$. Another option is decreasing $|g|$ by discarding some records. None of these options is desirable because they either make the data set more random or distort the global data distribution.

Our observation is that $|g|$ in Equation (10) really refers to the number of \emph{independent} Poisson trials in the randomization process for generating $g^*$. This can be seen from $\mu=E[O^*]=|g|(fp+(1-p)/m)$ (Lemma \ref{f'}(i)) where $|g|$ is the number of indicator variables $X_k$ for the event that the $k$-th row in $g^*$ has a particular $SA$ value $x$ (see the proof of Lemma \ref{f'}).
Since the upper bounds in Equations (2)-(5) decrease exponentially in $\mu$,
reducing $|g|$ is highly effective to increase these upper bounds, which helps restore the inequality in Equation (10), provided that the frequency $f$ remains unchanged. At the same time, we want to preserve the frequency of each $SA$ value to minimize the distortion to data distribution. To meet both requirements, we shall randomize a sample $g_1$ of $g$ and scale the randomized data $g_1^*$ back to the original size $|g|$. The key is to preserve the frequency of each $SA$ value in both sampling and scaling operations. This task is performed by the following three functions. Assume $|g|> s_g$.

\begin{enumerate}
\item $Sampling(g,s_{g})$: this function takes a sample of the size $s_g$ from $g$ such that the number of records for each $SA$ value is reduced by the same fraction. Let $b=s_g/|g|$ (note $b<1$). For each $SA$ value $x$ occurring in $g$, let $g_1$ contain any $\lfloor |g_x| b\rfloor$ records from $g_x$ and one additional record from $g_x$ with probability $|g_x|b-\lfloor|g_x |b\rfloor$, where $g_x$ denotes the set of records in $g$ for $x$. Note that all records in $g_x$ are identical. Return $g_1$.
    This step reduces the number of independent trials to $s_g$ while preserving the frequency of each $SA$ value.

\item $Perturbing(g_1,p,m)$: this function randomizes the $SA$ values of the records in $g_1$
as described in Section 3.1 and returns the randomized $g_1^*$.

\item $Scaling(g_1^*,|g|)$: this function scales up $g_1^*$ to the original size $|g|$ while preserving the frequency of each $SA$ value. Let $b'=|g|/|g_1^*|$. For each record
$r^*$ in $g_1^*$, let $g_2^*$ contain $\lfloor b'\rfloor$ duplicates of
$r^*$ and one additional duplicate of $r^*$ with probability
$b'-\lfloor b'\rfloor$. Return $g_2^*$.
Note that the duplication does not increase the number of independent trails because all duplicates of $t^*$ originate from the same independent trial for $r^*$.
\end{enumerate}

The algorithm based on the above idea is described in Algorithm \ref{algorithm1}. The input consists of
$D,p,m,\varepsilon,\delta$ and the output is $D_2^*$. For each micro group $g$, if $|g|\leq s_{g}$, $g_2^*$ is equal to $g^*$. Otherwise, $g_2^*$ is produced by the three steps on Lines 7-9 described above. $D_2^*$ contains all $g_2^*$.

\begin{example}\label{example4}
Suppose that a micro group $g$ contains 5 records for $x_1$ and 15 records for $x_2$. $|g|=20$, $|g_{x_1}|=5$, $|g_{x_2}|=15$. Assume $s_g=15$. Since $|g|> s_g$, $Sampling(g,s_{g})$
produces a sample $g_1$ of $g$ as follows. $b=s_g/|g|=0.75$.
$g_1$ contains $\lfloor 5 \times 0.75\rfloor=3$ records from $g_{x_1}$ and one additional record from $g_{x_1}$ with probability $5\times
0.75-3=75\%$; $g_1$ contains $\lfloor 15 \times 0.75\rfloor=11$
records from $g_{x_2}$ and one additional record from $g_{x_2}$ with
probability $15\times 0.75-11=25\%$. Suppose that after coin flips,
$g_1$ contains 4 records from $g_{x_1}$ and 11 records from $g_{x_2}$.
$Perturbing(g_1,p,m)$ produces the randomized version of $g_1$, $g_1^*$.

$Scaling(g_1^*,|g|)$ scales up $g_1^*$ to the size $|g|$ as follows.
$b'=|g|/|g_1^*|=20/15=1.33$. For each record $r^*$ in $g_1^*$,
$g_2^*$ contains $\lfloor b'\rfloor=1$ duplicate of $r^*$ and contains one additional duplicate with probability $1.33-1=33\%$. Suppose that after coin flips, one additional duplicate for $x_1$ is chosen, and four additional duplicates for $x_2$ are chosen. So $g_2^*$
contains 5 records for $x_1$ and 15 records for $x_2$. In general, $|g_2^*|$ may not be exactly equal to $|g|$.
\end{example}

\begin{algorithm}[h!]
\caption{Achieving Reconstruction Privacy} \label{algorithm1}
Input: $D,p,m,\varepsilon,\delta$\\
Output: Randomized $D_2^*$ that is
$(\epsilon,\delta)$-reconstruction-private
\begin{algorithmic} [1]
 \STATE $D_2^*\leftarrow {\O}$
 \FORALL {micro groups $g$ in $D$}
    \STATE compute $s_g=\frac{-2\ln \delta}{w\theta^2}$ using Equation (\ref{size})
    \IF {$|g|\leq s_g$}
       \STATE $g_2^* \leftarrow Perturbing(g,p,m)$
     \ELSE
       \STATE $g_1 \leftarrow Sampling(g,s_g)$
       \STATE $g_1^* \leftarrow Perturbing(g_1,p,m)$
       \STATE $g_2^*\leftarrow Scaling(g_1^*,|g|)$
    \ENDIF
    \STATE add $g_2^*$ to $D_2^*$
 \ENDFOR
 \STATE return $D_2^*$
\end{algorithmic}

\vspace{0.5cm}

$Sampling(g,s_{g})$:
\begin{algorithmic} [1]
 \STATE $temp \leftarrow{\O}$
 \STATE $b\leftarrow s_g/|g|$
 \FORALL {$SA$ value $x$ occurring in $g$}
   \STATE $g_x \leftarrow$ the set of records in $g$ having $x$
   \STATE add to $temp$ any $\lfloor |g_x |b\rfloor$ records from $g_x$
   \STATE add to $temp$ one additional record from $g_x$ with probability
 $|g_x|b-\lfloor|g_x |b\rfloor$
 \ENDFOR
 \STATE return $temp$
\end{algorithmic}

\vspace{0.5cm}

$Perturbing(g_1,p,m)$:
\begin{algorithmic} [1]
 \STATE $temp \leftarrow {\O}$
 \FORALL {record $r$ in $g_1$}
   \STATE let $r^*$ be $r$ with $SA$ perturbed with retention
   probability $p$
   \STATE add $r^*$ to $temp$
 \ENDFOR
 \STATE return $temp$
\end{algorithmic}

\vspace{0.5cm}

$Scaling(g_1^*,|g|)$:
\begin{algorithmic}[1]
 \STATE $b' \leftarrow |g|/|g_1^*|$
 \STATE $temp \leftarrow {\O}$
 \FORALL {record $r^*$ in $g_1^*$}
  \STATE add to $temp$ $\lfloor b'\rfloor$ duplicates of $r^*$
  \STATE add to $temp$ one additional duplicate of $r^*$ with probability
$b'-\lfloor b'\rfloor$
 \ENDFOR
 \STATE return $temp$
\end{algorithmic}
\end{algorithm}

We show that $D_2^*$ produced by Algorithm \ref{algorithm1} satisfies some interesting properties with respect to privacy and utility. Consider a micro group $g$ such that $|g|> s_g$. Let $g_1,g_1^*,g_2^*$ be computed for $g$ in Algorithm \ref{algorithm1}, and let $O_g^*,O_{g_1}^*,O_{g_2}^*$ be the observed count of a particular $SA$ value $x$ in $g^*,g_1^*,g_2^*$. Let $f_g$ and $f_{g_1}$ be the frequency of $x$ in $g$ and $g_1$. Let $F'_g,F'_{g_1},F'_{g_2}$ be the MLEs reconstructed from $g^*,g_1^*,g_2^*$. $u \simeq v$ denotes that $u$ and $v$ are equal modulo the coin flips in Scaling and Sampling. It is easy to see a few simple facts:

\begin{itemize}
\item Fact 1: $f_{g_1} \simeq f_g$, that is, Sampling preserves the frequency of $x$ in $g$. This is because the count of every $x$ in $g$ is reduced by the \emph{same} factor $b$ modulo the coin flips.

\item Fact 2: $O_{g_2}^*/|g_2^*| \simeq O_{g_1}^*/|g_1^*|$, that is, Scaling preserves the frequency of $x$ in $g_1^*$. This is because each record in $g_1^*$ is duplicated $b'$ times modulo the coin flips.

\item Fact 3: $F'_{g_1} \simeq F'_{g_2}$, that is, $g_1^*$ and $g_2^*$ give the same estimate of $f_g$. This follows from $F'_{g_i}= \frac{O_{g_i}^*/|g_i^*| -
(1-p)/m}{p}$, $i=1,2$ (Lemma \ref{f'}(ii)) and Fact 2.

\item Fact 4: $E[O_{g_2}^*] \simeq E[O_g^*]$ and $|g^*| \simeq |g_2^*|$. $|g^*| \simeq |g_2^*|$ follows from Sampling and Scaling. From Lemma \ref{f'}(i), $E[O_g^*]=|g| (fp +(1-p)/m)$ and $E[O_{g_1}^*]=s_g (f_1 p +(1-p)/m)$. Since Scaling  duplicates each $x$ occurrence in $g_1^*$ $\frac{|g|}{s_g}$ times, $E[O_{g_2}^*] \simeq \frac{|g|}{s_g} E[O_{g_1}^*] = |g| (f_1 p +(1-p)/m)$. Then Fact 1 implies $E[O_{g_2}^*] \simeq E[O_g^*]$.
\end{itemize}

\begin{theorem}[Privacy]\label{th:privacy}
For each micro group $g$, $g_2^*$ is
$(\varepsilon,\delta)$-reconstruction-private.
\end{theorem}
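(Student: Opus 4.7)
The plan is to split on which branch of Algorithm~\ref{algorithm1} is taken for the micro group $g$. The easy case is $|g| \leq s_g$, where the algorithm sets $g_2^* = Perturbing(g,p,m)$; this is literally the ordinary randomization $g^*$, so the conclusion follows immediately from Theorem~\ref{simplified2}, since the hypothesis $|g| \leq s_g$ of that theorem is exactly the condition tested on Line~4. All the work is in the other branch.

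For the substantive case $|g| > s_g$, the strategy is to treat the sampled set $g_1$ as an auxiliary micro group, apply Theorem~\ref{simplified2} to the pair $(g_1, g_1^*)$, and then transfer the privacy guarantee to $(g, g_2^*)$ using Facts~1 and~3. The first step is to observe that the threshold $s_g$ in Theorem~\ref{simplified2} depends on $g$ only through the maximum $SA$-frequency $f$ and not through $|g|$ itself. Because $Sampling$ scales every per-value count by the common factor $b = s_g/|g|$, Fact~1 gives $f_{g_1} \simeq f_g$, so the threshold $s_{g_1}$ computed for $g_1$ equals $s_g$. Since $|g_1| \simeq s_g = s_{g_1}$, Theorem~\ref{simplified2} yields that $g_1^*$ is $(\varepsilon,\delta)$-reconstruction-private with respect to the micro group $g_1$: for every $SA$ value $x$ in $g_1$, whenever $\Pr[(F'_{g_1}-f_{g_1})/f_{g_1} > \varepsilon] < U$ or the symmetric lower-tail event has probability $< L$, one has $\delta \leq \min\{U,L\}$.

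The second step is to port this guarantee across from $g_1^*$ (viewed as randomized $g_1$) to $g_2^*$ (viewed as randomized $g$). Here Facts~1 and~3 do the work: Fact~3 says $F'_{g_2} \simeq F'_{g_1}$, so the two estimators have the same distribution, and Fact~1 says $f_g \simeq f_{g_1}$. Therefore the random variable $(F'_{g_2}-f_g)/f_g$ has the same distribution as $(F'_{g_1}-f_{g_1})/f_{g_1}$, the upper and lower tail probabilities $U$ and $L$ agree in the two settings, and the inequality $\delta \leq \min\{U,L\}$ inherited from Step~1 is exactly the condition in Definition~\ref{refined-privacy} applied to $g_2^*$ and $g$. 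This is the conclusion of the theorem.

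The main obstacle is making the $\simeq$ arguments rigorous: Sampling and Scaling introduce their own coin flips, so the equalities in Facts~1 and~3 hold only up to the at-most-one rounding record per $SA$ value. I would discharge this in one of two ways. The cleaner option is to condition on the Sampling and Scaling coin flips and carry through the proof pointwise over that conditioning, observing that under this conditioning Fact~1 becomes an exact equality (after an obvious $\pm 1$ rounding that does not affect the Chernoff exponent once we note that $\mu$ depends on $|g_1|$ multiplicatively while the rounding error is $O(1)$). Alternatively, one can state the theorem as holding with the same $\simeq$ qualifier, mirroring the way Facts~1–4 are stated; this is consistent with the rest of Section~5 and avoids any technical fuss. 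Either route leaves Steps~1 and~3 as direct invocations of Theorem~\ref{simplified2} and Definition~\ref{refined-privacy}, respectively, and isolates all the subtlety into the transfer in Step~2.
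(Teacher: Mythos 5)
Your proposal is correct and follows essentially the same route as the paper's own proof: the trivial branch is discharged by Theorem~\ref{simplified2} directly, and in the branch $|g|>s_g$ you apply Theorem~\ref{simplified2} to $g_1$ via $|g_1|\simeq s_{g_1}$ (using Fact~1 to get $s_{g_1}\simeq s_g$) and then transfer the guarantee to $g_2^*$ through $\frac{F'_{g_2}-f_g}{f_g}\simeq\frac{F'_{g_1}-f_{g_1}}{f_{g_1}}$ from Facts~1 and~3, exactly as the paper does. Your closing remarks on making the $\simeq$ rigorous go slightly beyond the paper, which simply carries the modulo-coin-flips qualifier throughout, but this does not change the argument.
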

\begin{proof}
If $|g|\leq s_g$, $g_2^*$ is $(\varepsilon,\delta)$-reconstruction-private (Theorem \ref{simplified2}).  We assume $|g|> s_g$. $g_1^*$ is $(\varepsilon,\delta)$-reconstruction-private because $|g_1| \simeq s_{g_1}$
(Theorem \ref{simplified2}). $|g_1| \simeq s_{g_1}$ follows because $f_{g_1} \simeq f_g$ (Fact 1) implies $s_g \simeq s_{g_1}$, and from $|g_1| \simeq s_g$, $|g_1| \simeq s_{g_1}$.
Facts 1 and 3 imply $\frac{F'_{g_2}-f_g}{f_g} \simeq \frac{F'_{g_1}-f_{g_1}}{f_1}$.
So, $\Pr[\frac{F'_{g_2}-f_g}{f_g} > \varepsilon]\simeq\Pr[\frac{F'_{g_1}-f_{g_1}}{f_{g_1}}> \varepsilon]$, and $\Pr[\frac{F'_{g_2}-f_g}{f_g} < - \varepsilon]\simeq\Pr[\frac{F'_{g_1}-f_{g_1}}{f_{g_1}} < - \varepsilon]$.
Since $g_1^*$ is $(\varepsilon,\delta)$-reconstruction-private, so is $g_2^*$.
\end{proof}

Below, we show that $F'_2$ has the same mean as $F'$. Let $S$ be any set of micro groups, and let $S^*$ and $S_2^*$ be the sets of corresponding records in $D^*$ and $D_2^*$, respectively. For any $SA$ value $x$, let $F'_2$ denote the estimated frequency of $x$ in $S$ based on $S_2^*$ and let $F'$ denote the estimated frequency of $x$ in $S$ based on $S^*$.

\begin{theorem}[Utility]\label{frequency}
$E[F'_2]\simeq E[F']$.
\end{theorem}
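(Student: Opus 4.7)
The plan is to reduce the statement to per-micro-group facts (specifically Fact 4) via the closed-form expression for the MLE given by Lemma \ref{f'}(ii). First I would write both estimators in the same canonical form: since $F'$ is a local estimate over $S$ with observed count $O^*$ of $x$ in $S^*$, Lemma \ref{f'}(ii) gives $F' = \frac{O^*/|S| - (1-p)/m}{p}$, so $E[F'] = \frac{E[O^*]/|S| - (1-p)/m}{p}$. Analogously, $F'_2 = \frac{O_2^*/|S_2^*| - (1-p)/m}{p}$, where $O_2^*$ is the count of $x$ in $S_2^*$ and the normalizer is $|S_2^*|$ (which is itself a random quantity because of the coin flips in Sampling and Scaling).

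Next I would decompose the aggregate counts across the micro groups comprising $S$: writing $S = \bigcup_{g} g$, we have $O^* = \sum_g O_g^*$, $O_2^* = \sum_g O_{g_2}^*$, $|S| = \sum_g |g|$, and $|S_2^*| = \sum_g |g_2^*|$. For each micro group $g$, Fact 4 immediately gives $|g^*| \simeq |g_2^*|$ and $E[O_{g_2}^*] \simeq E[O_g^*]$. Summing over $g \in S$ and using linearity of expectation yields $|S_2^*| \simeq |S|$ and $E[O_2^*] \simeq E[O^*]$.

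Finally I would substitute these two approximate equalities into the closed-form expression for $E[F'_2]$:
\[
E[F'_2] = \frac{E[O_2^*]/|S_2^*| - (1-p)/m}{p} \simeq \frac{E[O^*]/|S| - (1-p)/m}{p} = E[F'],
\]
where the $\simeq$ absorbs the residual coin-flip error of at most one record per micro group that is carried by Fact 4.

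The only delicate point, and the place I would be careful, is that $|S_2^*|$ appears in a denominator, so propagating the modulo-coin-flip equality through a ratio requires either arguing that the rounding residual is negligible relative to $|g|$ (it is bounded by one record per $SA$-value per group), or formalizing $\simeq$ as ``equal up to deterministic rounding errors from Sampling/Scaling.'' Given that the paper has used $\simeq$ throughout the preceding facts in exactly this informal sense, the substitution is legitimate and the conclusion $E[F'_2] \simeq E[F']$ follows.
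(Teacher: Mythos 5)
Your proposal matches the paper's own proof essentially step for step: both express $E[F']$ and $E[F'_2]$ via Lemma \ref{f'}(ii), decompose the counts and sizes as sums over the micro groups in $S$, and invoke Fact 4 to transfer $E[O_{g_2}^*]\simeq E[O_g^*]$ and $|g_2^*|\simeq |g^*|$ to the aggregate level. Your closing remark about the denominator is a reasonable caveat on the informal $\simeq$ convention, but it does not alter the argument, which is the same as the paper's.
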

\begin{proof}
Let $O_2^*=\sum_{g\in S} O_{g_2}^*$ and $O^*=\sum_{g\in S} O_{g}^*$. Let $|S^*|= \sum_{g\in S} |g^*|$ and $|S_2^*|= \sum_{g \in S} |g_2^*|$. From Lemma \ref{f'}(ii),
$E[F']=\frac{E[O^*]/|S^*| -
(1-p)/m}{p}$ and $E[F'_2]=\frac{E[O_2^*]/|S_2^*| - (1-p)/m}{p}$.
From Fact 4, $|S^*|\simeq |S_2^*|$ and
$E[O^*] \simeq E[O_2^*]$, which implies $E[F'] \simeq E[F'_2]$.
\end{proof}


Despite $E[F'_2]\simeq E[F']$, $F'_2$ will have a larger error than $F'$ due to the reduced number of independent trials for $S_2^*$. This is exactly what we want in order to restore $(\varepsilon,\delta)$-reconstruction-privacy. However, the error increase for aggregate reconstruction is smaller than that for micro reconstruction because aggregation reconstruction involves more than one micro group.
We will evaluate this claim empirically in Section 6.

\section{Empirical Evaluation}
This empirical study aims to answer two questions: The first question is ``to what extent is $(\varepsilon,\delta)$-reconstruction-privacy violated assuming that major privacy definitions are satisfied?". The second question is ``what price will be paid for having $(\varepsilon,\delta)$-reconstruction-privacy?" Section \ref{eSetup} introduces our data sets and utility metrics. Section 6.2 presents the findings in the data publishing setting and Section 6.3 presents the findings in the output perturbation setting.

\begin{figure}[h]
\subfigure{
\begin{minipage}[t]{0.45\linewidth}
\centering
\includegraphics[width=4cm,height=3cm]{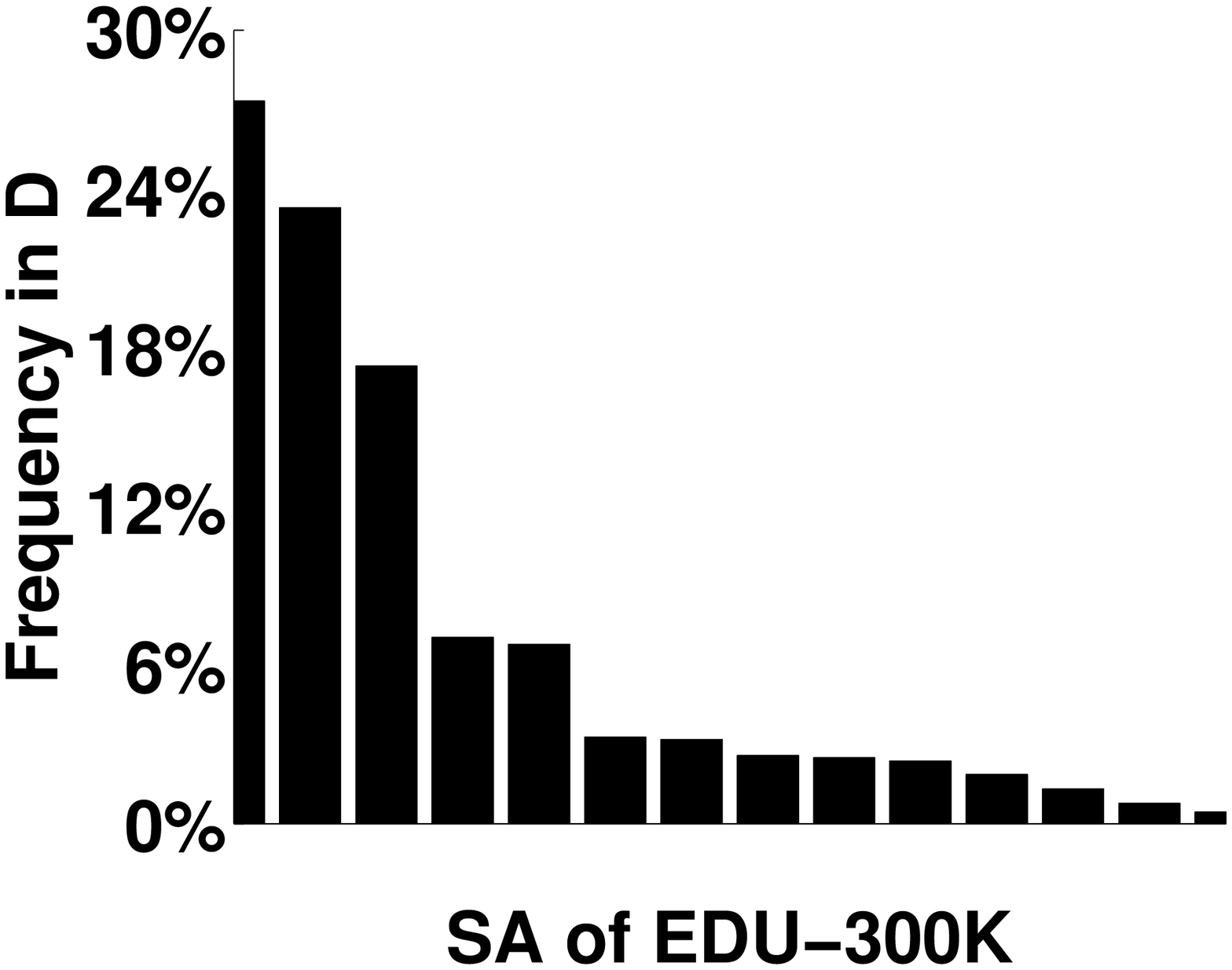}
\end{minipage}}
\hfill
\subfigure{
\begin{minipage}[t]{0.45\linewidth}
\centering
\includegraphics[width=4cm,height=3cm]{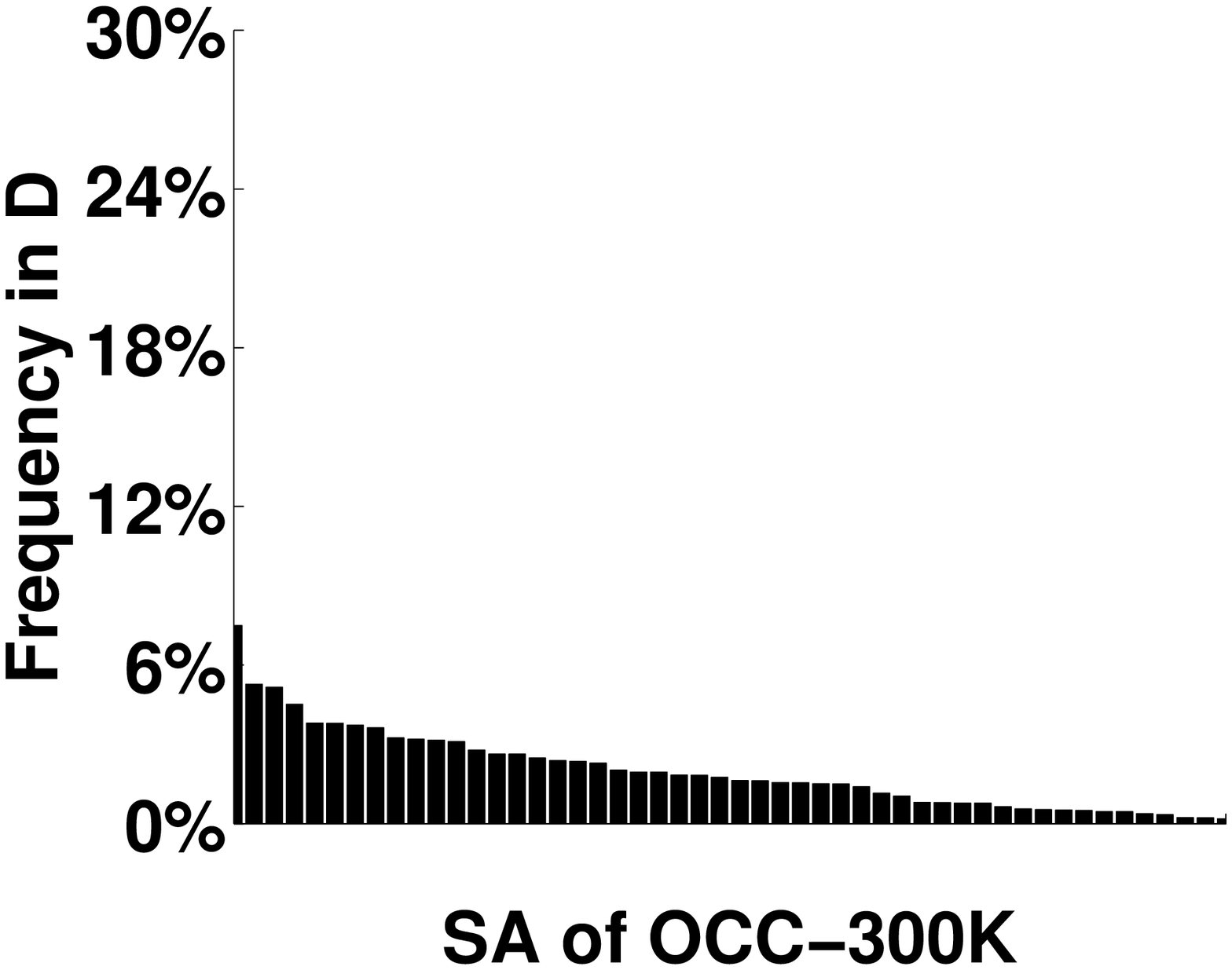}
\end{minipage}}
\caption{Frequency Distribution for $SA$}
\label{figure:distribution}
\end{figure}

\subsection{Experimental Setup}\label{eSetup}

\textbf{Data Sets}. We utilize the real
CENSUS data containing personal information of 500K American adults, previously used in \cite{XT06b},\cite{MKG+006}, and \cite{CW10}.
Table \ref{table:attribues} shows the 7 discrete attributes of the data. Two base tables were generated from CENSUS. OCC denotes the base table with Occupation as the sensitive attribute ($SA$) and the remaining attributes as the non-sensitive attributes ($NA$). EDU denotes the base table with Education as the sensitive attribute ($SA$) and the remaining attributes as the non-sensitive attributes ($NA$). OCC-$n$ and EDU-$n$ denote the samples of cardinality $n$, where $n=100K,200K,300K,400K,500K$. Figure \ref{figure:distribution} shows the frequency distribution of $SA$ for OCC-300K and EDU-300K. EDU-300K has a more skewed distribution than OCC-300K.

\begin{table}[h]
\centering
\caption{Number of Values in Attributes}
\begin{tabular}{|l|l|} \hline
\textbf{Attributes} & \textbf{Domain Size}\\ \hline
Age & 77\\ \hline
Gender & 2\\ \hline
Education & 14\\ \hline
Marital & 6\\ \hline
Race & 9\\ \hline
Work-class & 7\\ \hline
Occupation & 50\\ \hline
\end{tabular}
\label{table:attribues}
\end{table}

\textbf{Count Queries}. We evaluate the utility of data analysis through count queries of the following form
\begin{equation}
\begin{split}
 &SELECT\ COUNT\ (*)\ FROM\ D \\
 &WHERE\ A_1=a_1 \wedge \cdots \wedge A_d = a_d \wedge SA = x_i \\
\end{split}
\label{equation:queryForm}
 \end{equation}
where $\{A_1,...,A_d\}$ is a subset of non-sensitive attributes and $a_j$ is a value from the domain of $A_j$, $j=1,\ldots,d$, and $x_i$ is a value from the domain of $SA$. The answer to the query, denoted by $ans$, is the count of records in $D$ that satisfy the predicate in the WHERE clause. Since our primary
interest is in aggregate information, we consider only queries that have at least $0.1\%$ selectivity, where the \emph{selectivity} is defined as $ans/|D|$. This means that $d$ is restricted to be 1, 2, or 3 because a query for any larger $d$ has a selectivity less than $0.1\%$.
We generate a pool of 5,000 queries as follows. For each a query, we randomly select $d$ from $\{1,\ 2,\ 3\}$ with equal probability and randomly select $d$ non-sensitive attributes without replacement. For each attribute $A_i$ selected, we randomly choose a value $a_i$ from  the domain of $A_i$. Finally, we randomly choose a value $x_i$ from the domain of $SA$ and create a query following the template in Equation (\ref{equation:queryForm}). If the query has a selectivity of $0.1\%$ or more, we add it to the pool. This process is repeated until the pool contains 5,000 queries.


\begin{table}[h]
\centering
\caption{Parameter Table}
\begin{tabular}{|l|l|} \hline
\textbf{Parameters} & \textbf{Settings}\\ \hline
$p$ & 0.1, 0.3, \textbf{0.5}, 0.7, 0.9\\ \hline
$\varepsilon$  & 0.1, 0.3, \textbf{0.5}, 0.7, 0.9\\ \hline
$\delta$  & 0.14, 0.22, \textbf{0.3}, 0.38, 0.46\\ \hline
$|D|$ & 100K, 200K, \textbf{300K}, 400K, 500K\\ \hline
\end{tabular}
\label{table:parameters}
\end{table}

\begin{figure*}[t!]
  \begin{center}
    \mbox{
      \subfigure[vs. $p$]{
        \includegraphics[width=4cm,height=3cm]{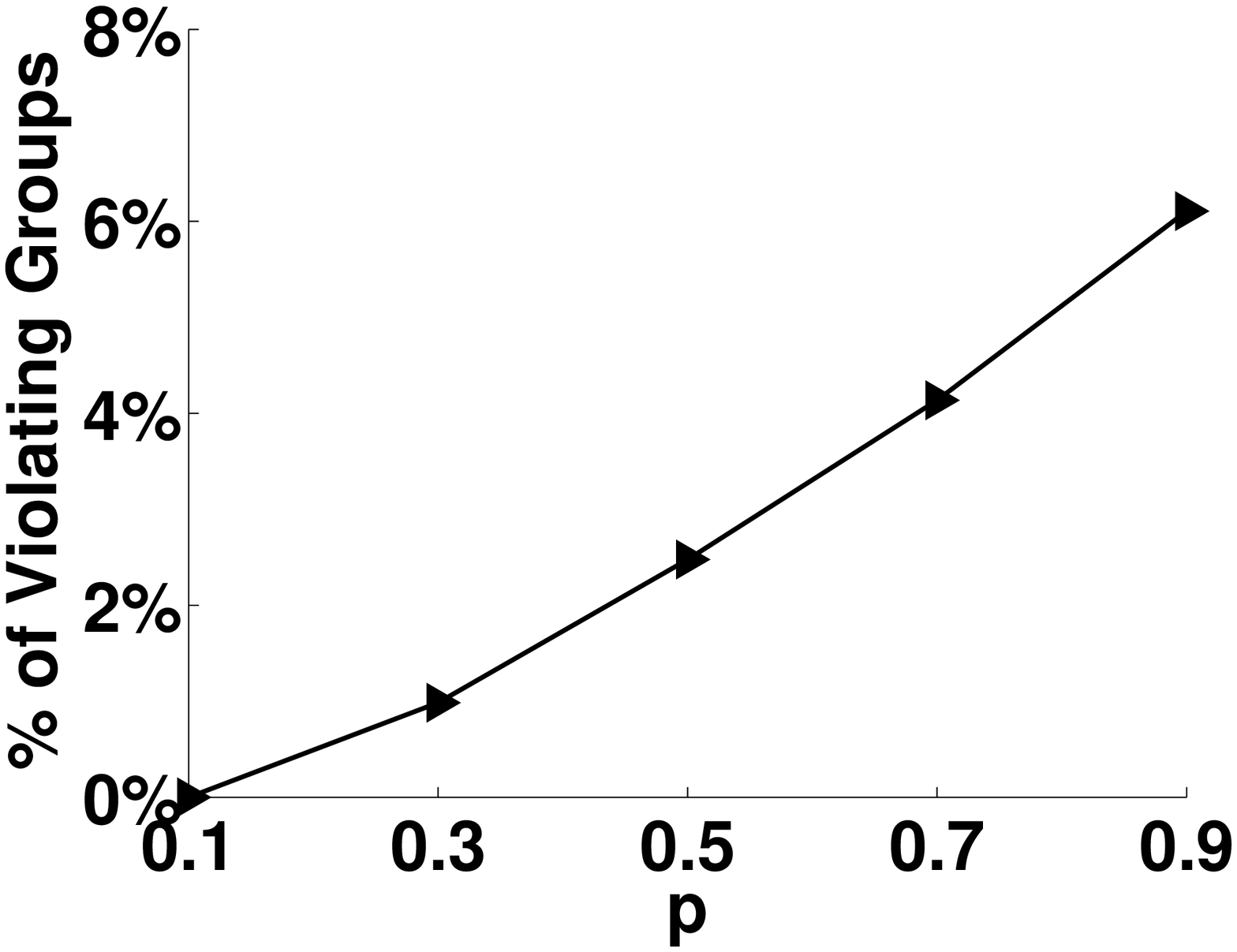}
      }
      \subfigure[vs. $\varepsilon$]{
        \includegraphics[width=4cm,height=3cm]{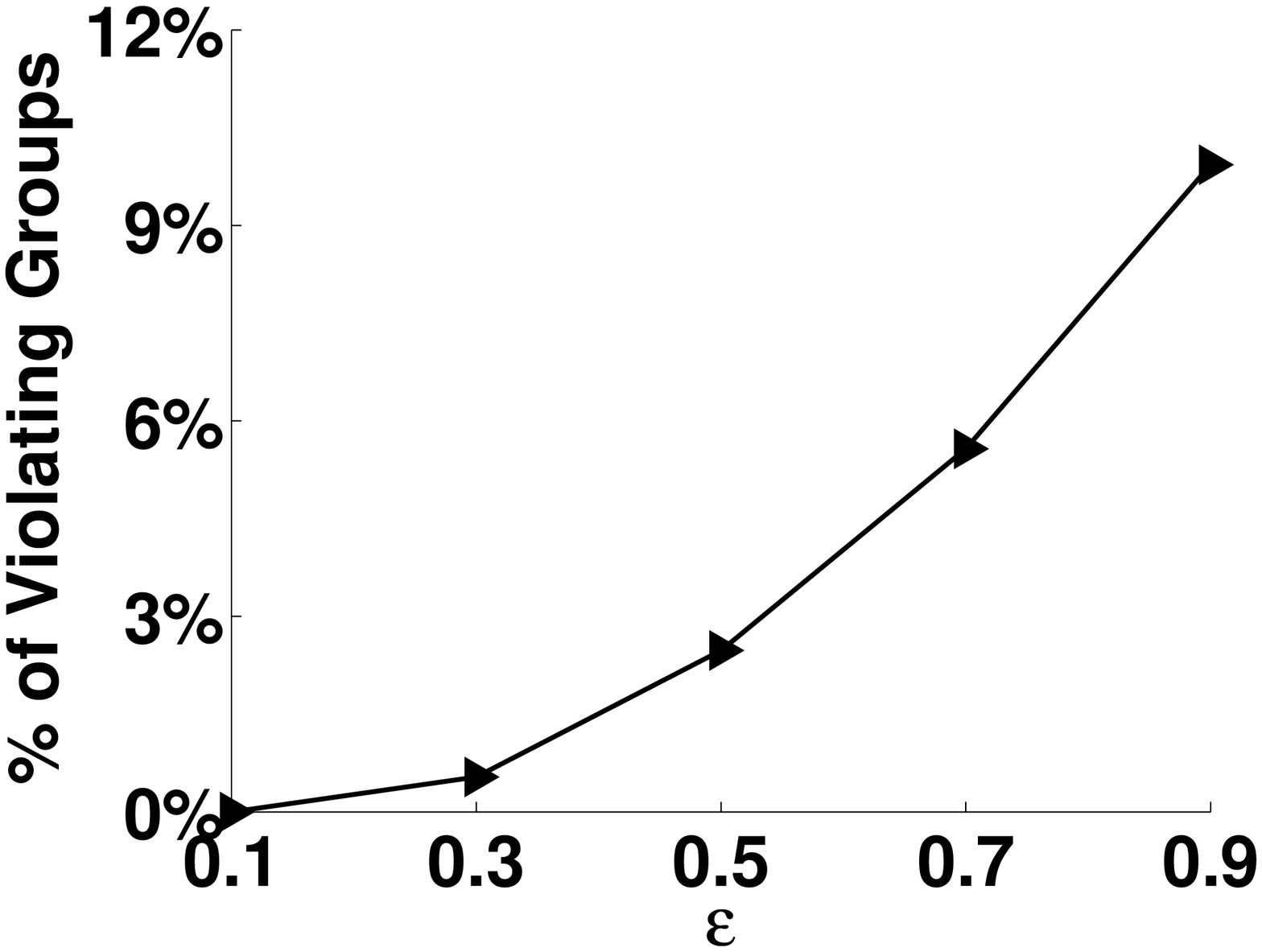}
        }
        \subfigure[vs. $\delta$]{
        \includegraphics[width=4cm,height=3cm]{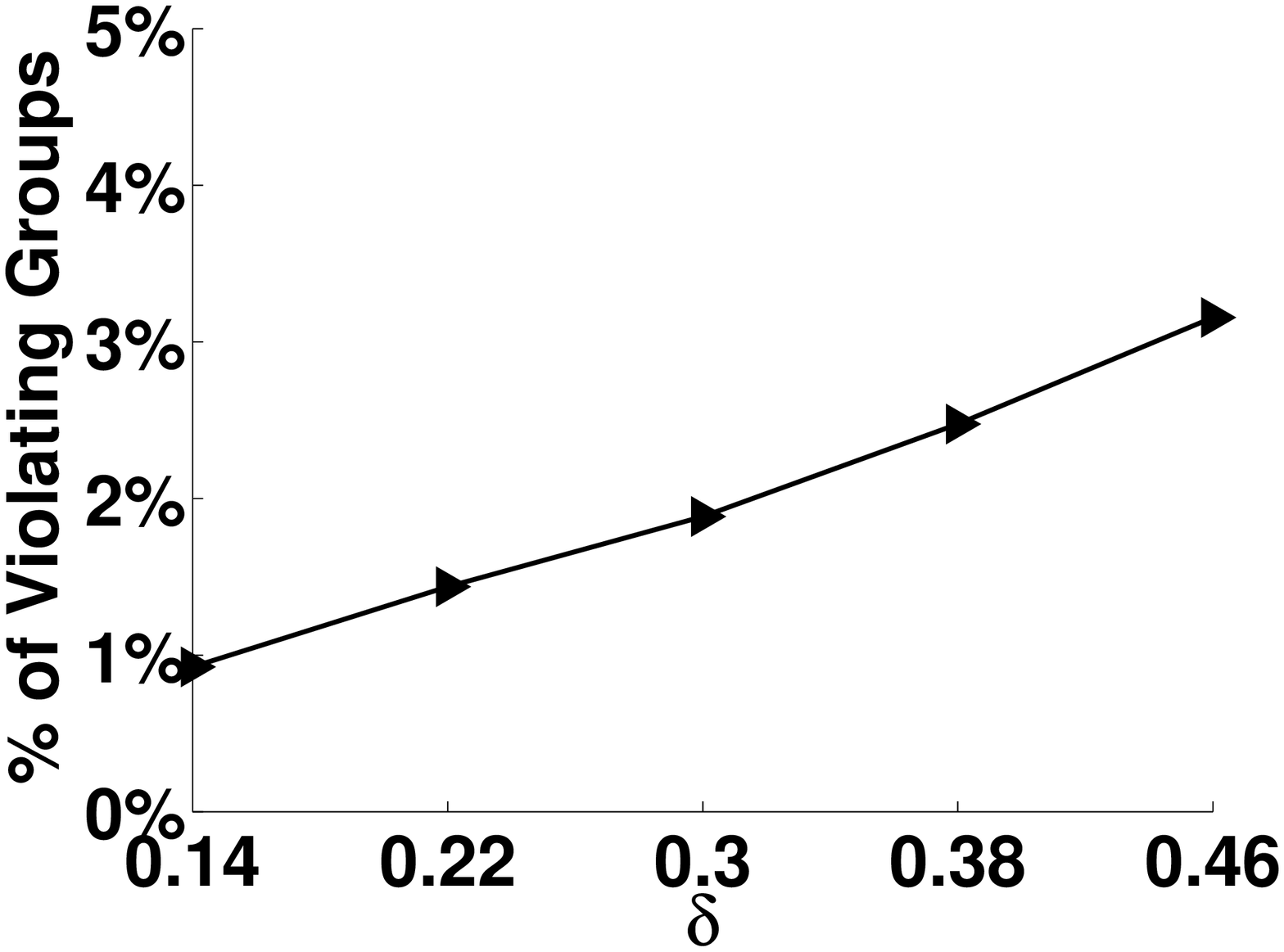}
         }
      }
      \subfigure[vs. $|D|$]{
        \includegraphics[width=4cm,height=3cm]{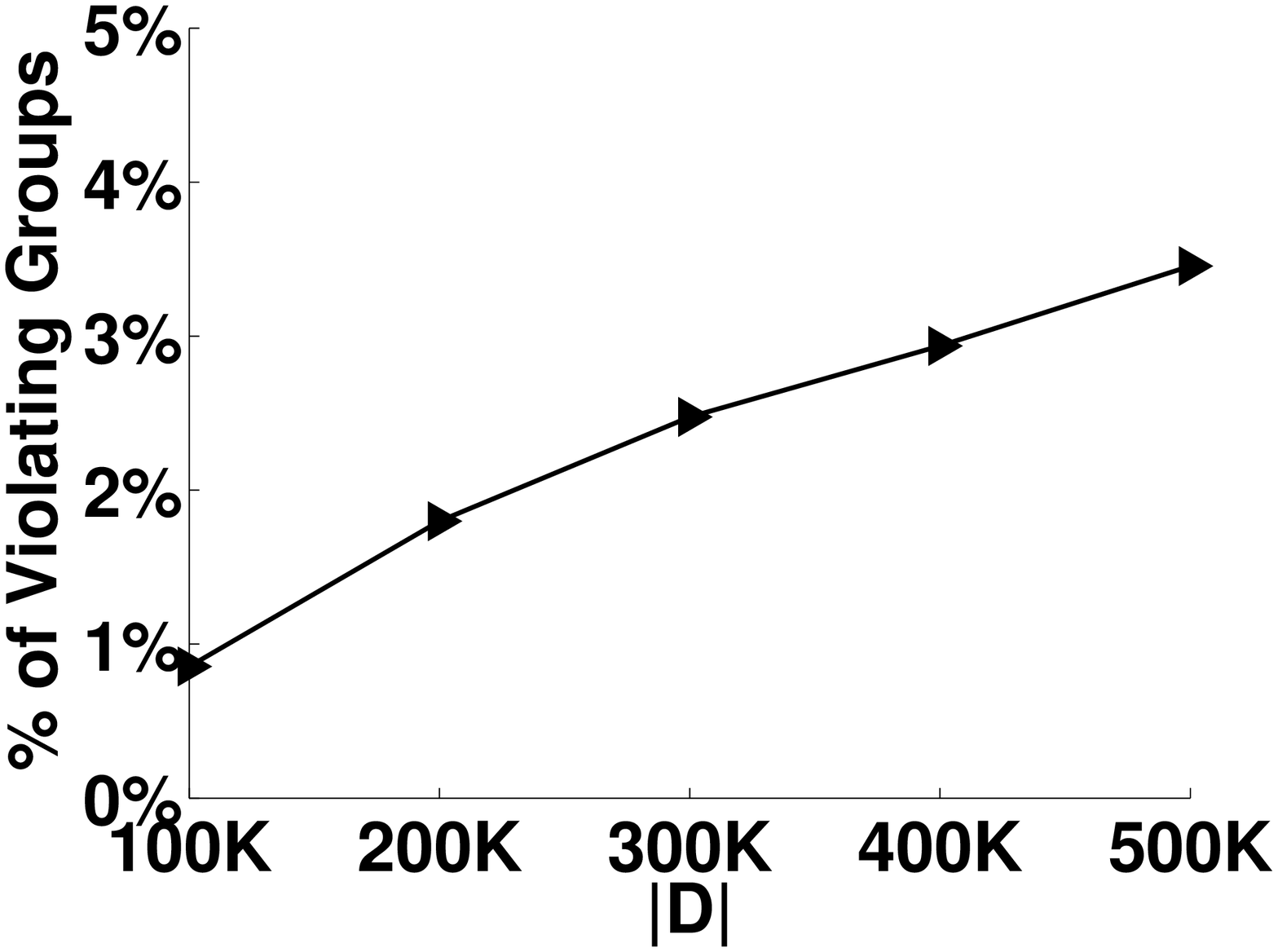}
      }
    \caption{EDU: \% of Violating Micro Groups in $D^*$}
    \label{fig:perEDU}
    \vspace{1cm}
     \mbox{
      \subfigure[vs. $p$]{
        \includegraphics[width=4cm,height=3cm]{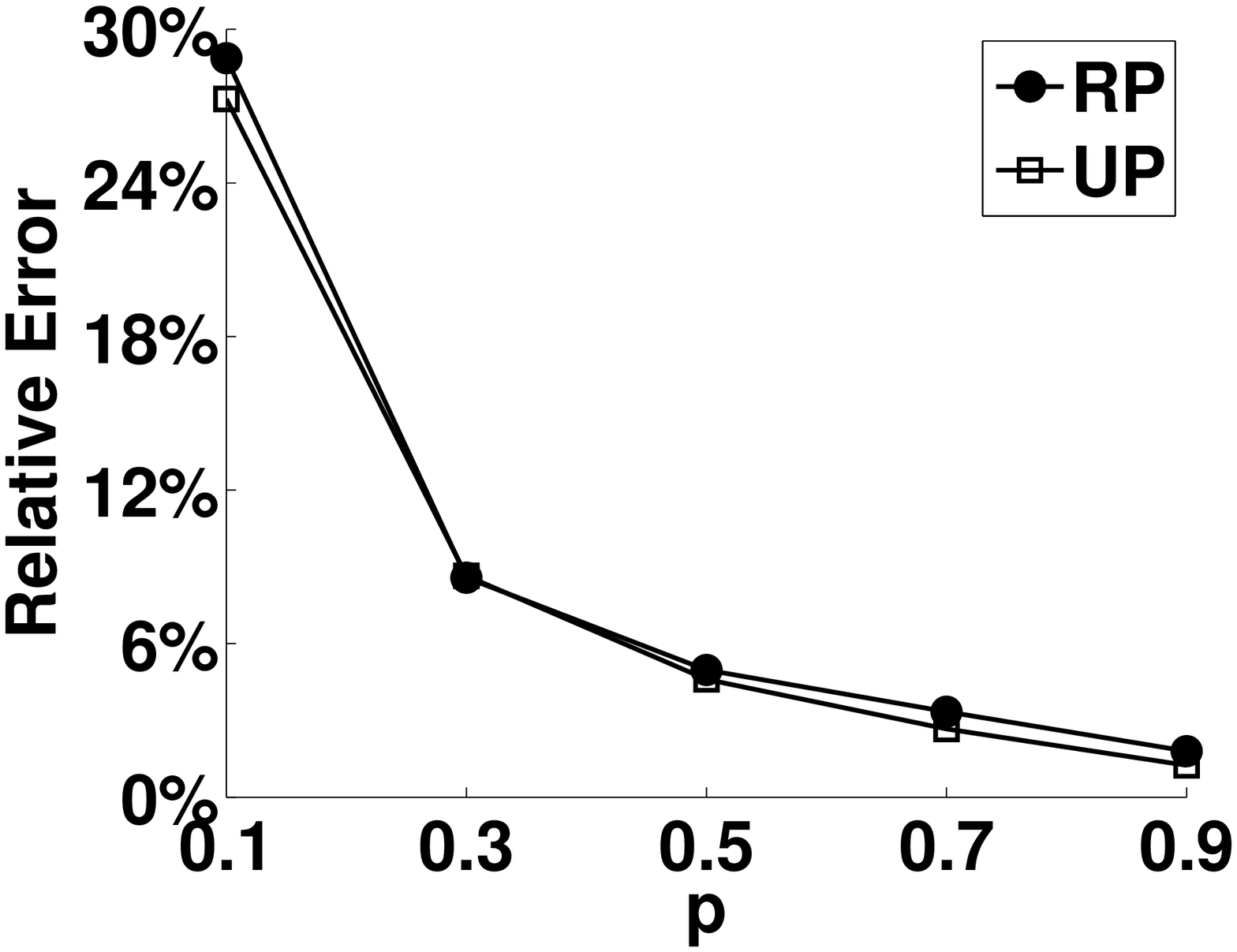}
      }
      \subfigure[vs. $\varepsilon$]{
        \includegraphics[width=4cm,height=3cm]{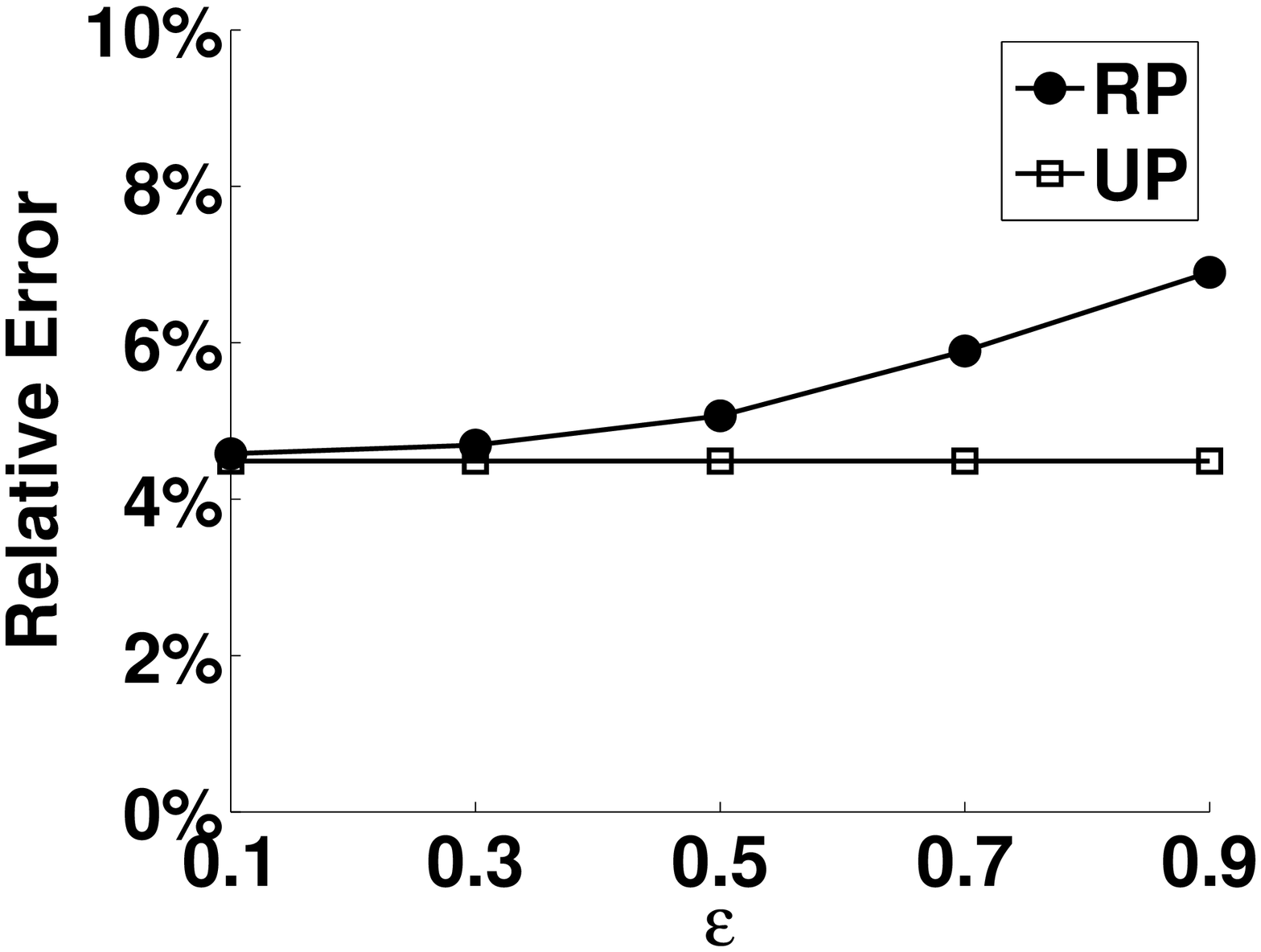}
        }
        \subfigure[vs. $\delta$]{
        \includegraphics[width=4cm,height=3cm]{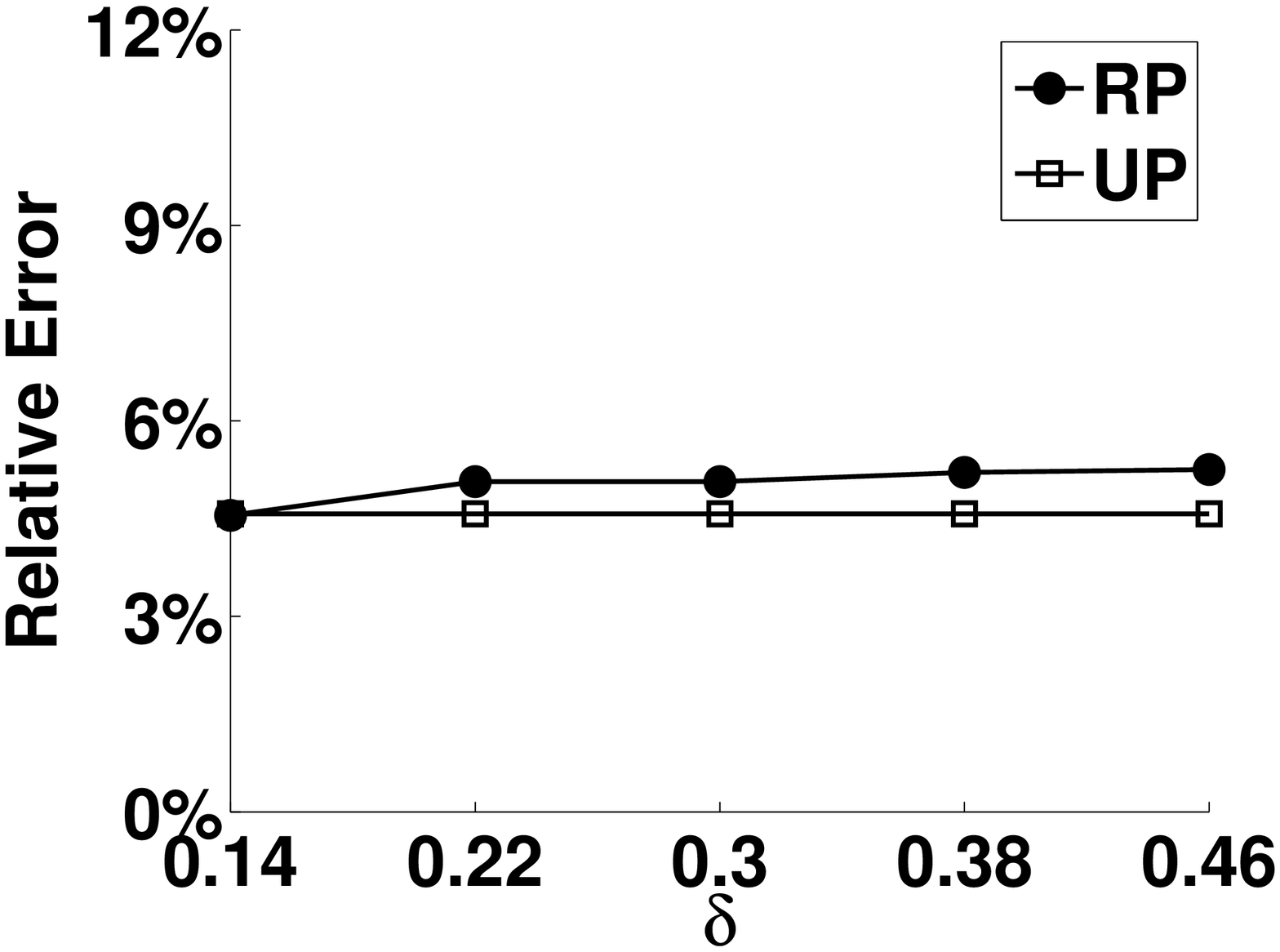}
      }
      }
      \subfigure[vs. $|D|$]{
        \includegraphics[width=4cm,height=3cm]{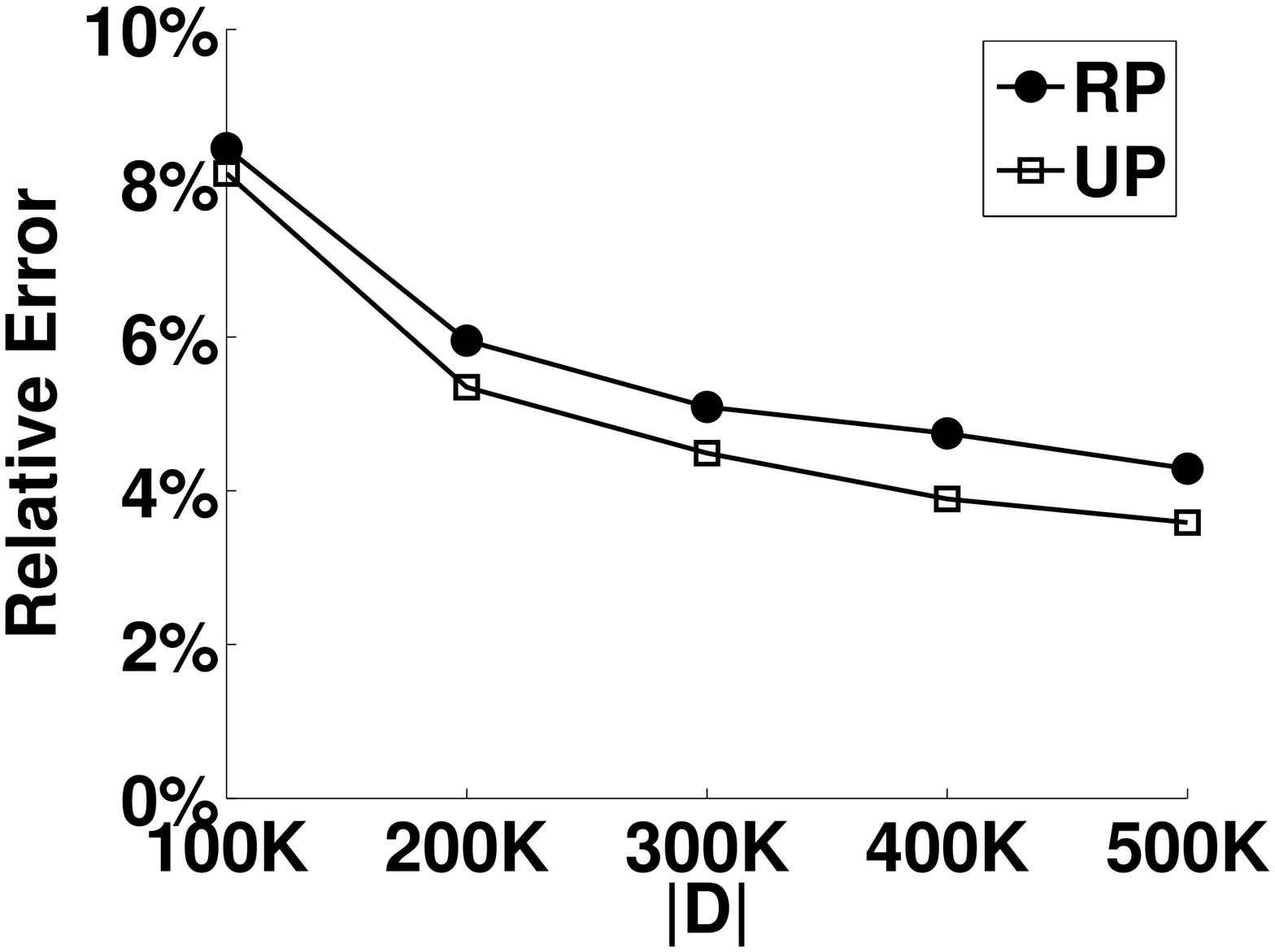}
      }
    \caption{EDU: Comparison of Relative Error for Count Queries}
    \label{fig:errEDU}
  \end{center}
\end{figure*}

\subsection{Findings in Data Publishing}
In the data publishing scenario, the randomized data $D^*$ is published and a query is answered using $D^*$ and a reconstruction process as described in Section 3.1.
Our study focuses on two questions: (i) To what extent is $(\varepsilon,\delta)$-reconstruction-privacy violated on $D^*$? (ii) What additional price is incurred for the protection of $(\varepsilon,\delta)$-reconstruction-privacy?
To answer the first question, we study the \emph{percentage of violating micro groups} in $D^*$ that fail to satisfy $(\varepsilon,\delta)$-reconstruction-privacy.
To answer the second question, we measure the (average) \emph{relative error} for answering the count queries in our query pool, defined as $\frac{|est-ans|}{ans}$, where $ans$ is the true answer and $est$ is the estimated answer. We compare the relative error generated using $D_2^*$ produced by Algorithm 1, denoted by RP (for reconstruction privacy), with the relative error generated using $D^*$ produced by the standard uniform perturbation, denoted by UP. Both methods use a retention probability $p$ to randomize the data, thus, ensure some uncertainty of the $SA$ value in a record such as $\rho_1$-$\rho_2$ privacy. According to \cite{E03,AH05,CW10}, the maximum $p$ for providing $\rho_1$-$\rho_2$ privacy is $p=\frac{\gamma-1}{m-1+\gamma}$, where $\gamma =\frac{\rho_2}{\rho_1}\times\frac{1-\rho_1}{1-\rho2}$ and $m=|SA|$. Additionally, RP has the parameters $\varepsilon$ and $\delta$ for $(\varepsilon,\delta)$-reconstruction-privacy.
We consider the settings of $p$, $\varepsilon$, $\delta$, and $|D|$, shown in
Table \ref{table:parameters}. The default settings are in boldface. 



\subsubsection{Findings on EDU Data Sets}\label{EDU}
Figure~\ref{fig:perEDU} shows the percentage of violating micro
groups in $D^*$ vs $p$, $\varepsilon$, $\delta$, and $|D|$. Here are several observations. Firstly,
there is a nontrivial percentage of micro groups that violate $(\varepsilon,\delta)$-reconstruction-privacy. A larger retention probability $p$ leads to more violating micro groups. The violation diminishes when $p$ becomes very small (i.e., less than 20\%), but in this case aggregate reconstruction is affected significantly because $D^*$ is too noisy, as shown by the larger relative error. A larger $\varepsilon$ or $\delta$ leads to more violating micro groups due to a more restrictive privacy constraint. A larger data cardinality $|D|$ leads to more violating micro groups. This is because a larger $|D|$ leads to a larger $|g|$, i.e., more independent trials when generating $g^*$, thus, a more accurate reconstruction. In fact, $|g|\leq  =\frac{-2\ln \delta}{w\theta^2}$ is more likely to be violated as $|g|$ increases.

For each experiment in Figure~\ref{fig:perEDU}, Figure~\ref{fig:errEDU} shows the relative error of UP and RP. Note that, in Figure~\ref{fig:errEDU} (b)(c), UP remains constant because UP does not depend on $\varepsilon$ and $\delta$. The most significant finding is that, across all of $p, \varepsilon, \delta$, and $|D|$,
the error of RP is only slightly more than the error of UP. This point can also be seen by cross-examining Figure~\ref{fig:perEDU} and Figure~\ref{fig:errEDU}: the increase of error for RP is much slower than the increase in the percentage of violating micro groups. The reason is that the error boosting of RP through reducing the number of independent trails has less effect on queries that involve a large set of records.
This finding supports our claim that the proposed method does not compromise the utility of aggregate information. For $p$ and $|D|$, the trend in Figure~\ref{fig:perEDU} and Figure~\ref{fig:errEDU} is opposite: as $p$ or $|D|$ increases, the percentage of violating micro groups increases, but the error of estimated query answers decreases. This makes sense because violating micro groups are caused by high accuracy of estimated query answers.

\subsubsection{Findings on OCC Data Sets}\label{OCC}
\begin{figure*}[t!]
  \begin{center}
    \mbox{
      \subfigure[vs. $p$]{
        \includegraphics[width=4cm,height=3cm]{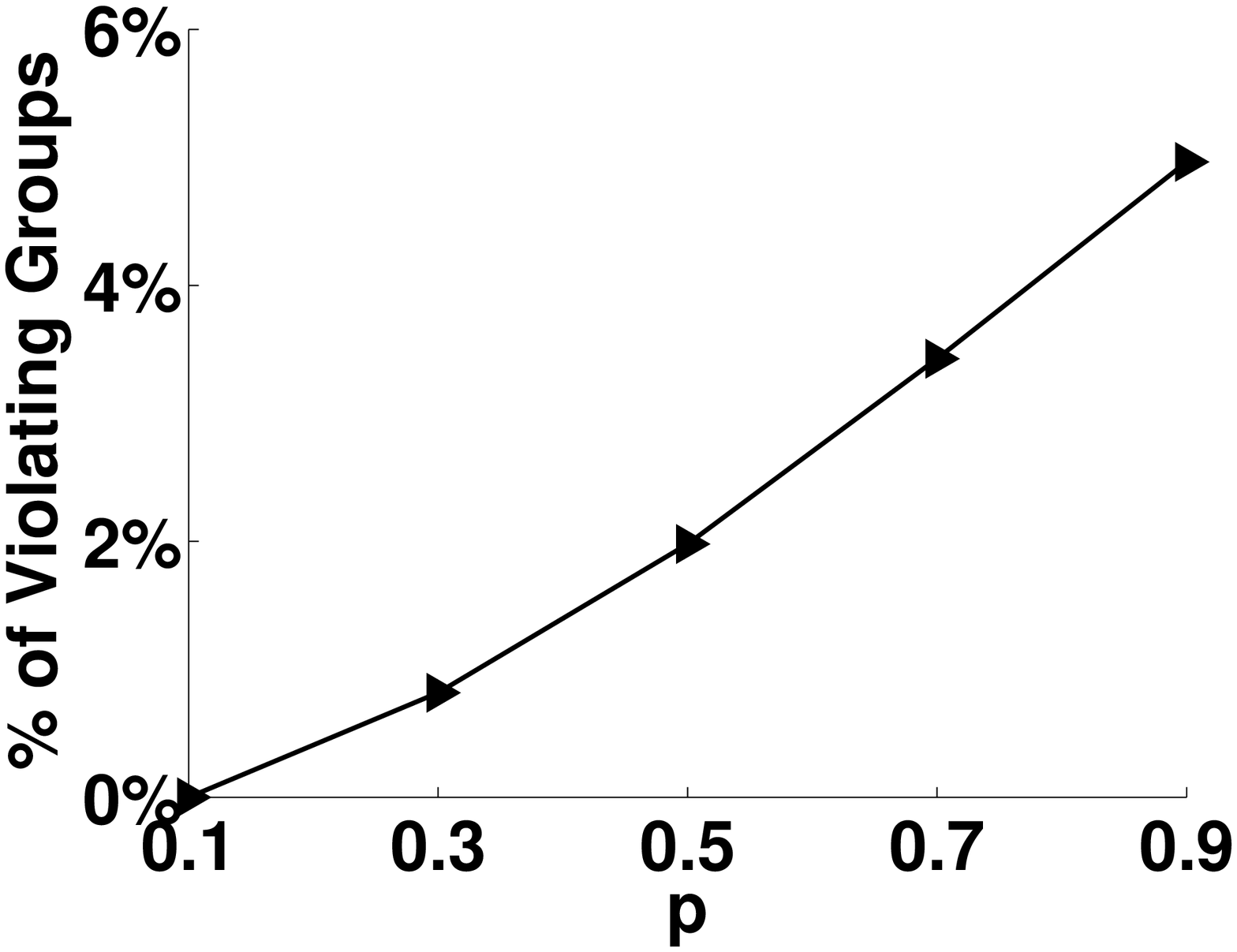}
      }
      \subfigure[vs. $\varepsilon$]{
        \includegraphics[width=4cm,height=3cm]{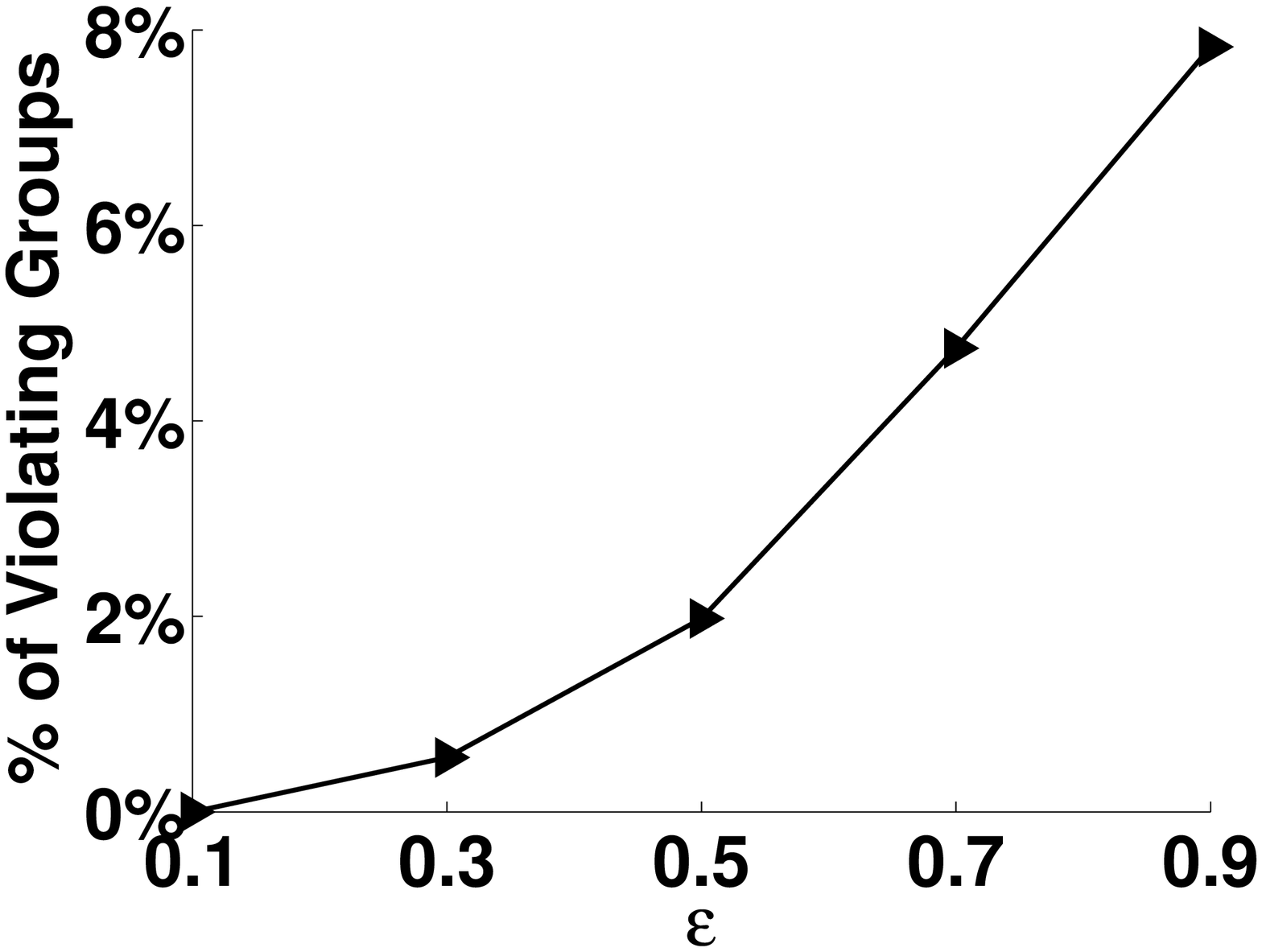}
        }
        \subfigure[vs. $\delta$]{
        \includegraphics[width=4cm,height=3cm]{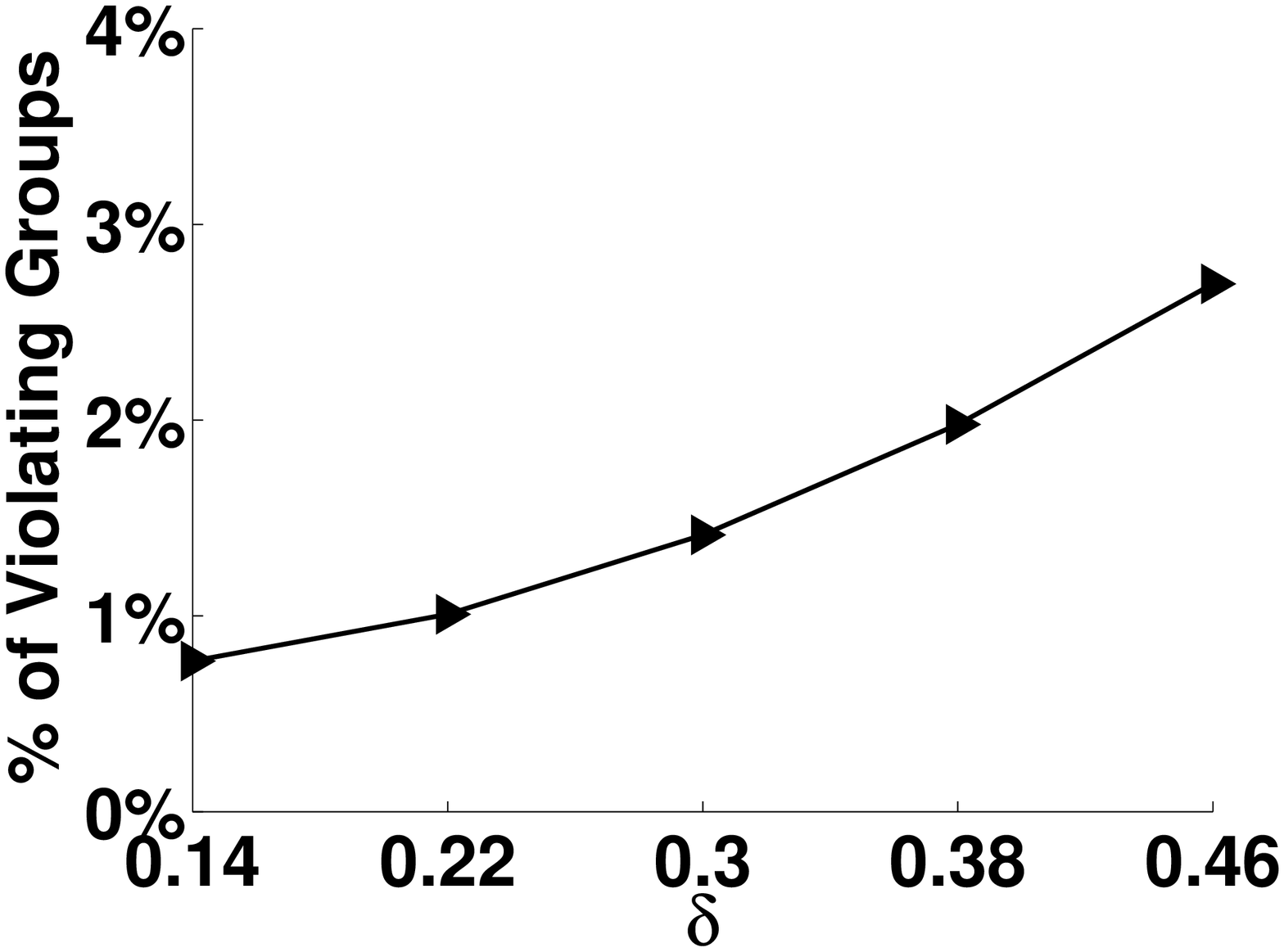}
      }
      }
      \subfigure[vs. $|D|$]{
        \includegraphics[width=4cm,height=3cm]{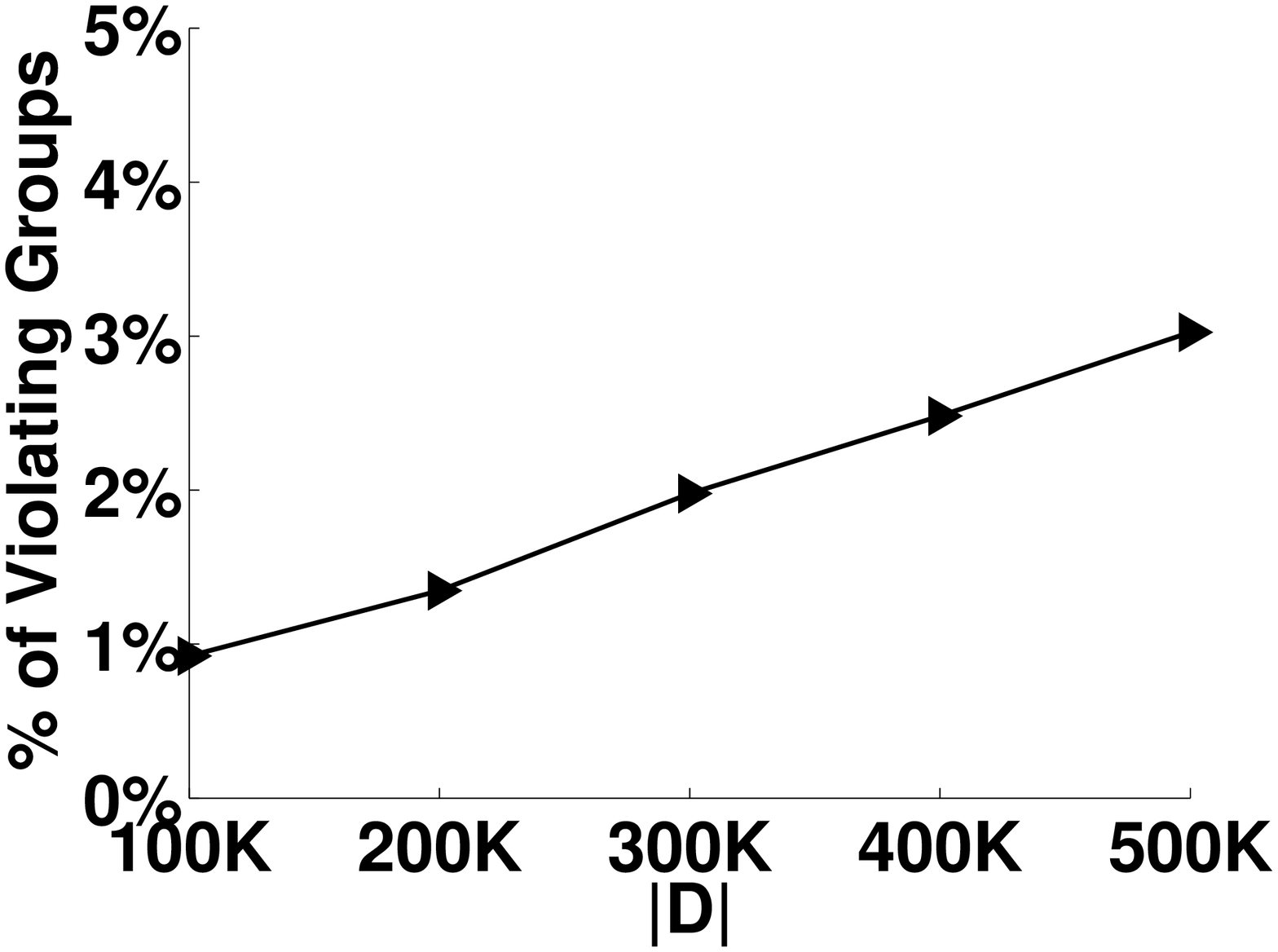}
      }
    \caption{OCC: \% of Violating Micro Groups in $D^*$}
    \label{fig:perOCC}
    \vspace{1cm}
    \mbox{
      \subfigure[vs. $p$]{
        \includegraphics[width=4cm,height=3cm]{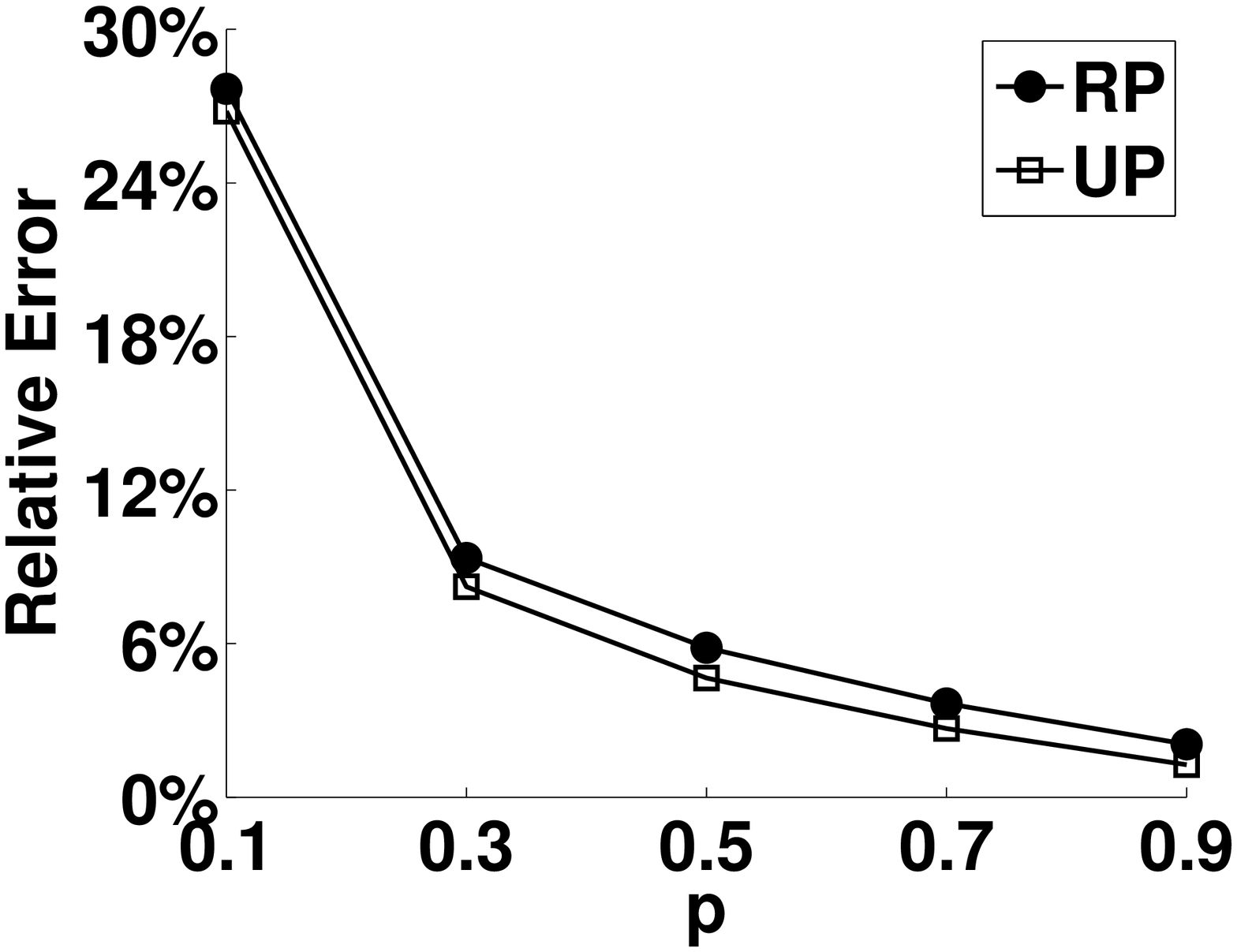}
      }
      \subfigure[vs. $\varepsilon$]{
        \includegraphics[width=4cm,height=3cm]{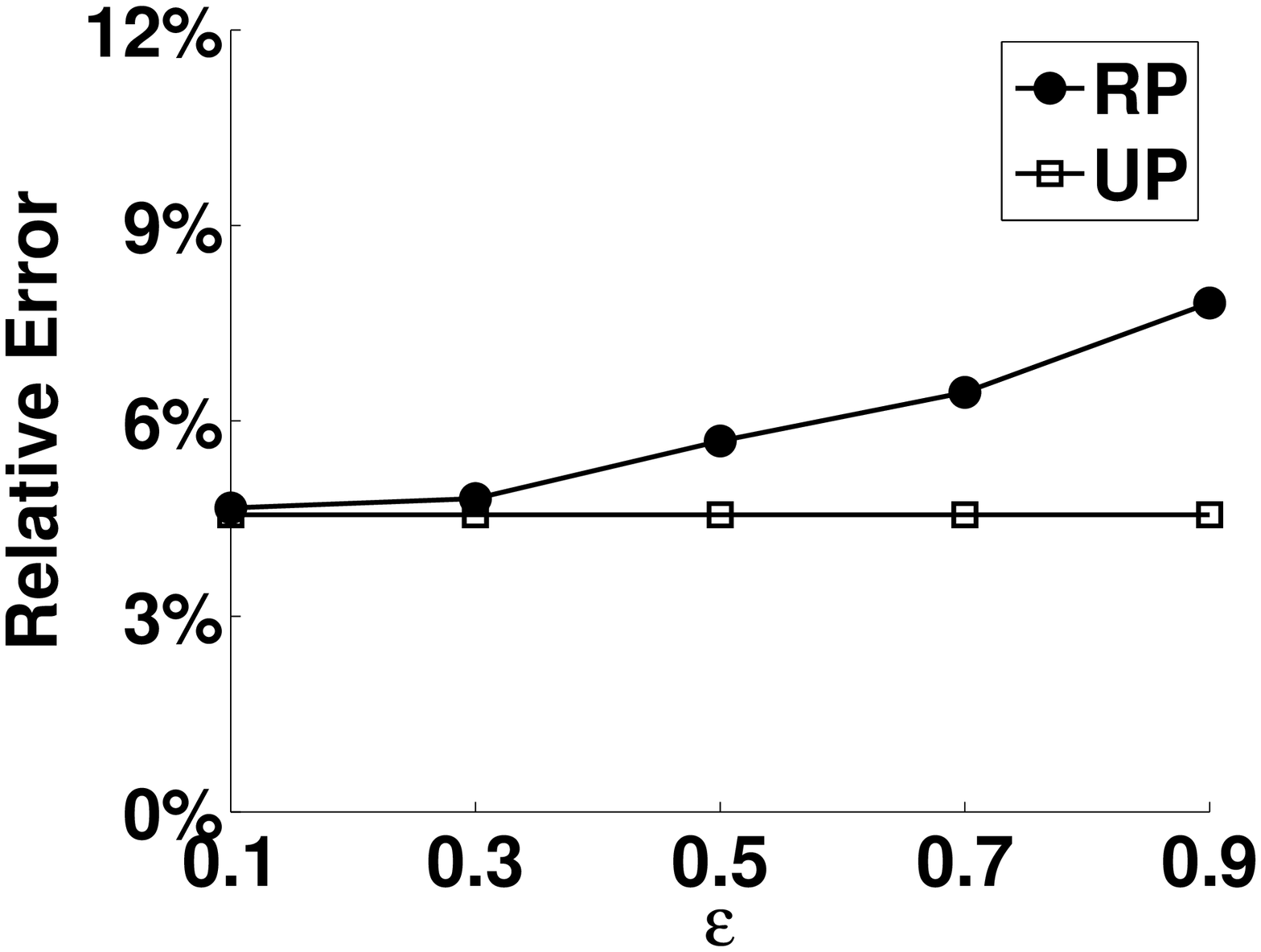}
        }
        \subfigure[vs. $\delta$]{
        \includegraphics[width=4cm,height=3cm]{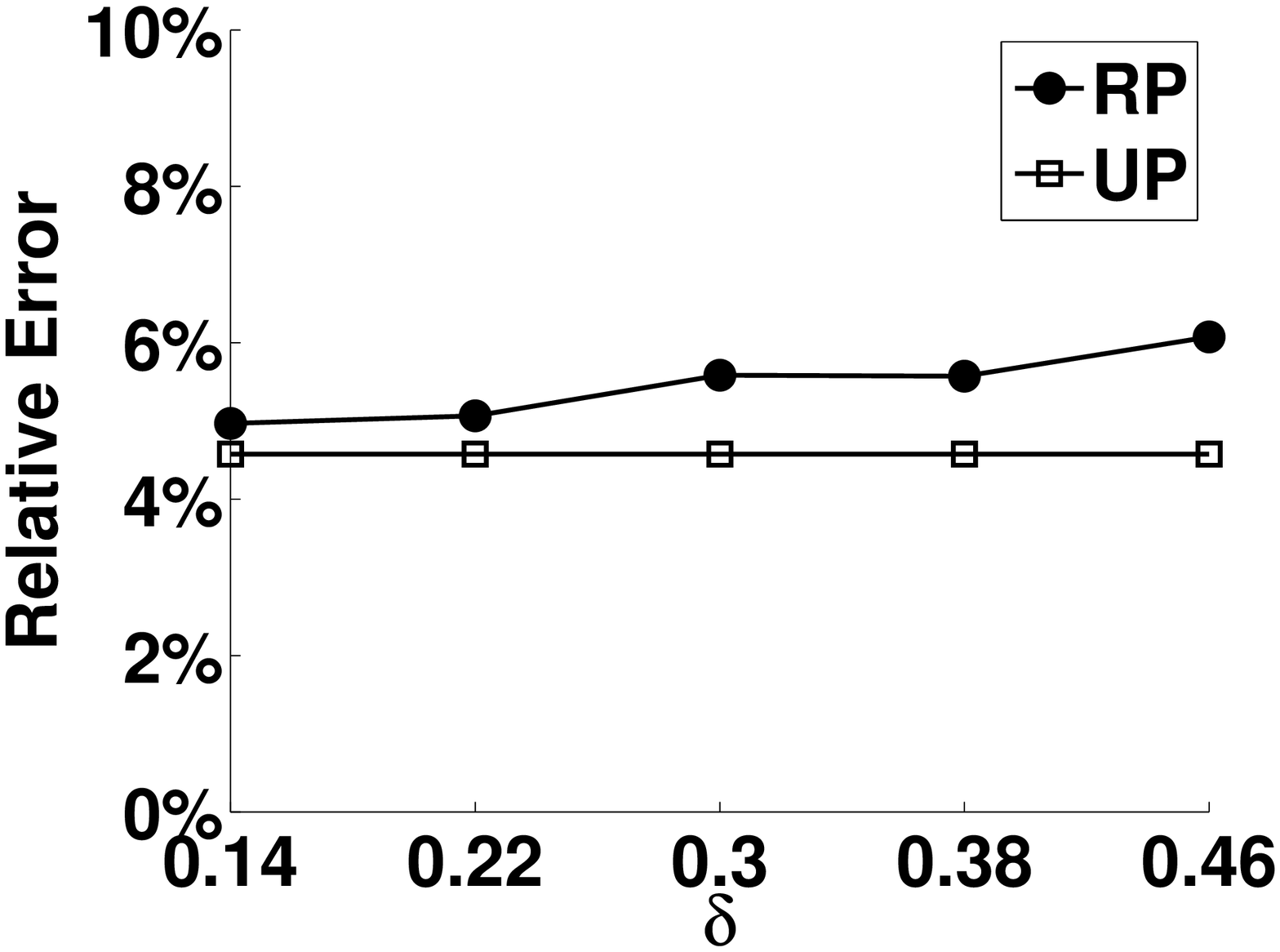}
      }
      }
      \subfigure[vs. $|D|$]{
        \includegraphics[width=4cm,height=3cm]{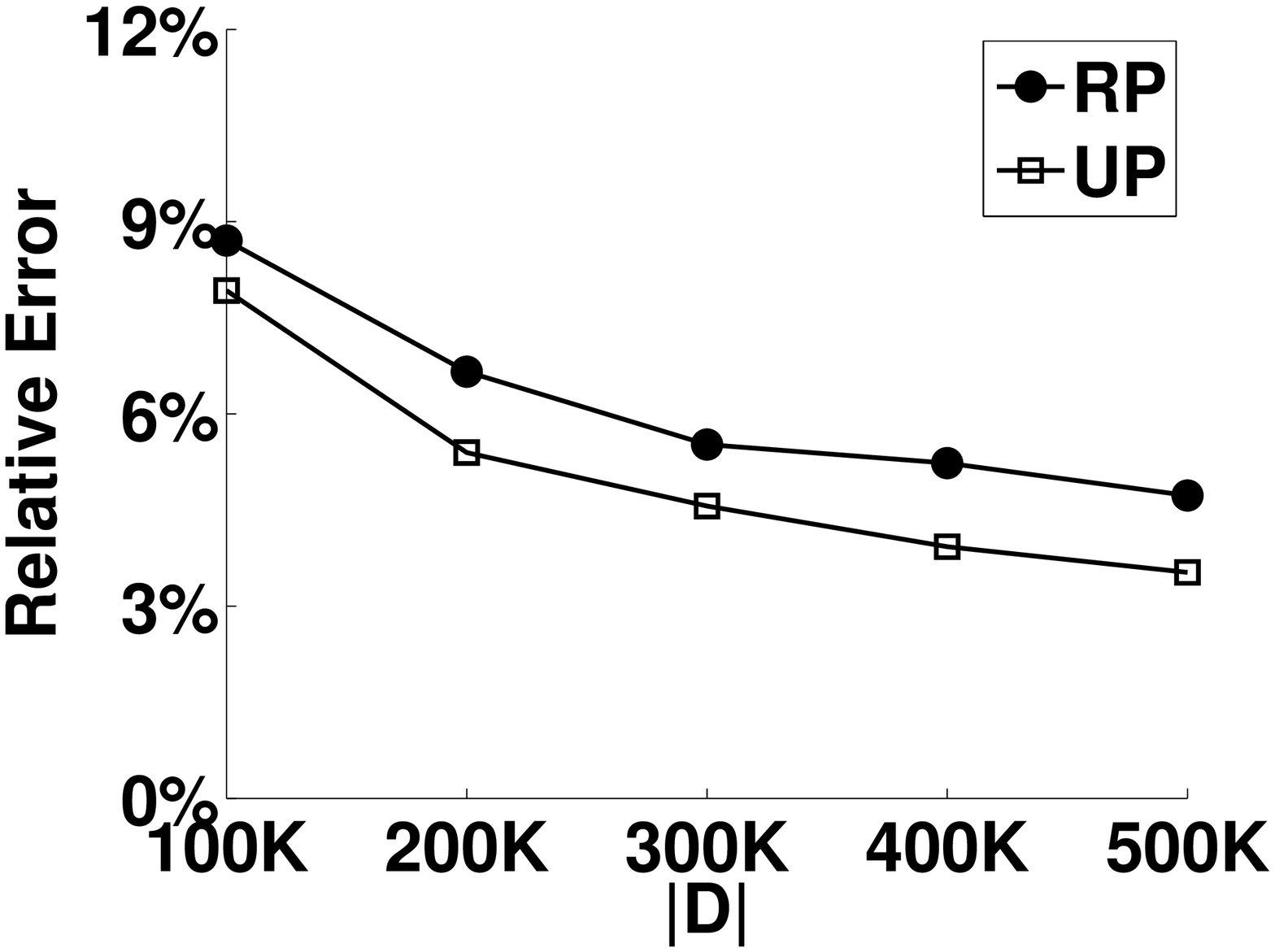}
      }
    \caption{OCC: Comparison of Relative Error for Count Queries}
    \label{fig:errOCC}
  \end{center}
\end{figure*}

We performed a similar study on the more balanced OCC data sets.
Figure~\ref{fig:perOCC} shows the percentage of violating micro groups and Figure~\ref{fig:errOCC} shows the relative error, respectively. As we can see, the findings are quite similar to those of EDU data sets.
%

\subsection{Findings on Output Perturbation}
Although Definition \ref{privacy} is based on reconstruction from a randomized data $D^*$, the notion of $(\varepsilon, \delta)$-reconstruction-privacy is applicable to any reconstruction. In this experiment, we consider reconstruction from noisy query answers in the output perturbation scenario.
We assume that differential privacy \cite{Dwork06} is in place. The $\lambda$-differential privacy mechanism adds random noises $\xi$ to the query answer $o$ and publishes the noisy answer $o'=o+\xi$, where $\xi$ follows the Laplace distribution $Lap(b)=\frac{1}{2b}exp(-\frac{|\xi|}{b})$, $b=1/\lambda$. $\lambda$ determines the noise level. We show that even if differential privacy is satisfied, there is a concern about violation of $(\varepsilon, \delta)$-reconstruction-privacy. We use EDU-500K and OCC-500K.

For each data set, we pick 7 micro groups $g$
that have the largest maximum frequency $f$ of any $SA$ value, among those with $|g|>70$ for EDU-500K and $|g|>100$ for OCC-500K. For each of these groups, $g$, let $f$ and $F'$ be the true and estimated frequencies of the most frequent $SA$ value $x$ in $g$. $F'$ is computed by the noisy answers to two queries $Q_1$ and $Q_2$ constructed similar to those in Example \ref{example2}. Let $o_i$ be the true answer and let $o'_i$ be the noisy answer for $Q_i$, $i=1,2$. $f=o_2/o_1$ and $F'=o'_2/o'_1$. By treating $\Pr[\frac{F'-f}{f}> \varepsilon]$  and $\Pr[\frac{F'-f}{f} < -\varepsilon]$ as the upper bounds of these probabilities themselves, $(\varepsilon, \delta)$-reconstruction-privacy is violated if $\Pr[\frac{F'-f}{f} > \varepsilon] < \delta$ or $\Pr[\frac{F'-f}{f} < -\varepsilon] < \delta$. To compute these probabilities, we generated the noisy answers $o'_1$ and $o'_2$ 100 times and considered the fraction of the cases for $\frac{F'-f}{f}> \varepsilon$ and $\frac{F'-f}{f} < -\varepsilon$. The numbers for these cases are  in Tables~\ref{table:diffEDU} and \ref{table:diffOCC}.

%
%

Take Group 7 in Table~\ref{table:diffEDU} (in boldface) for EDU-500K as an example. For $\lambda=0.1$ and $\varepsilon=0.3$, there are 8 cases for $>\varepsilon$ and 8 cases for $<-\varepsilon$. Intuitively, this says that, out of the 100 noisy answers $(o'_1,o'_2)$ examined, 8 cases have an error greater than 30\% and 8 cases have an error less than $-30\%$. In other words, the estimate $F'$ falls within the $\pm 30\%$ interval with the confidence level of 84\%. The privacy concern comes from the fact that the frequency of $x$ in $g$ is more than 70\% (shown in the column ``$f$ in $g$"), which is significantly higher than the 2.5\% in the whole data set $D$ (shown in the column ``$f$ in $D$"). Thus, even if the $\pm 30\%$ interval is large, $F'$ discloses a much higher probability of having $x$ for the individuals in $g$ than for the individuals in $D$. Similar disclosures are observed on the more balanced OCC-500K. For $\lambda=0.1$ and $\varepsilon=0.2$, Group 2 in Table~\ref{table:diffOCC} (in boldface) shows a $\pm 20\%$ error interval with the confidence level of 84\%. Although the frequency $f$ of $x$ in this group is only 47\%, it is significantly higher than the frequency of 2.4\% in the whole data set. Therefore, $F'$ discloses quite a bit about the $SA$ value of the individuals in this group.

At $\lambda=0.05$, a larger error for $F'$ has been observed due to the increased noise level. However, since $\lambda$ is a constant for a given $\lambda$-differential privacy mechanism, the error for $F'$ can be reduced by a sufficiently large group size $|g|$ and frequency $f$ in $g$. To provide $(\varepsilon,\delta)$-reconstruction-privacy, the $\lambda$-differential privacy mechanism has to employ a very small $\lambda$. This solution shares the same drawback with the solution of using a small retention probability $p$, i.e., choosing the global noise parameters, i.e., $\lambda$ and $p$, according to the \emph{worst case} of any micro group in the data set. As discussed in Section 6.2.1, this type of solutions destroys both micro reconstruction and aggregate reconstruction, making the data useless for \emph{all} queries.


%
%

\begin{table*}[t!]
\centering
\caption{EDU-500K: the Number of Cases for  $\frac{F'-f}{f}> \varepsilon$ and $\frac{F'-f}{f} < -\varepsilon$}
\begin{tabular}{|c|c|c|c|c|c|c|c|c|c|c|c|} \hline
\multirow{3}{*}{Micro Group g} & \multirow{3}{*}{$|g|$} & \multirow{3}{*}{$f$ in $g$} & \multirow{3}{*}{$f$ in $D$}& \multicolumn{4}{|c|}{$\lambda=0.1$} & \multicolumn{4}{|c|}{$\lambda=0.05$} \\ \cline{5-12}
& & & & \multicolumn{2}{|c|}{$\varepsilon=0.2$} & \multicolumn{2}{|c|}{$\varepsilon=0.3$} & \multicolumn{2}{|c|}{$\varepsilon=0.2$} & \multicolumn{2}{|c|}{$\varepsilon=0.3$} \\ \cline{5-12}
& & & & {$> \varepsilon$} & {$<-\varepsilon$} & {$>\varepsilon$} & {$<-\varepsilon$} & {$> \varepsilon$} & {$< -\varepsilon$} & {$> \varepsilon$} & {$< -\varepsilon$}\\ \hline
1 & 89 & 0.87 & 0.025 & 18 & 13 & 14 & 7 & 34 & 29 & 24 & 22\\ \hline
2 & 74 & 0.77 & 0.025 & 25 & 23 & 14 & 7 & 32 & 32 & 28 & 23\\ \hline
3 & 138 & 0.76 & 0.172 & 11 & 9 & 8 & 3 & 18 & 27 & 20 & 15\\ \hline
4 & 104 & 0.76 & 0.172 & 21 & 12 & 9 & 6 & 35 & 28 & 22 & 22\\ \hline
5 & 104 & 0.75 & 0.172 & 23 & 14 & 11 & 6 & 35 & 25 & 21 & 28\\ \hline
6 & 77 & 0.74 & 0.025 & 26 & 11 & 18 & 11 & 26 & 39 & 27 & 26\\ \hline
\textbf{7} & \textbf{102} & \textbf{0.72} & \textbf{0.025} & \textbf{18} & \textbf{13} & \textbf{8} & \textbf{8} & \textbf{32} & \textbf{31} & \textbf{29} & \textbf{21}\\ \hline
\end{tabular}
\label{table:diffEDU}
\centering
\caption{OCC-500K: the Number of Cases for  $\frac{F'-f}{f}> \varepsilon$ and $\frac{F'-f}{f} < -\varepsilon$}
\begin{tabular}{|c|c|c|c|c|c|c|c|c|c|c|c|}\hline
\multirow{3}{*}{Micro Group g} & \multirow{3}{*}{$|g|$} & \multirow{3}{*}{$f$ in $g$} & \multirow{3}{*}{$f$ in $D$} & \multicolumn{4}{|c|}{$\lambda=0.1$} & \multicolumn{4}{|c|}{$\lambda=0.05$} \\ \cline{5-12}
& & & & \multicolumn{2}{|c|}{$\varepsilon=0.2$} & \multicolumn{2}{|c|}{$\varepsilon=0.3$} & \multicolumn{2}{|c|}{$\varepsilon=0.2$} & \multicolumn{2}{|c|}{$\varepsilon=0.3$} \\ \cline{5-12}
& & & & {$> \varepsilon$} & {$<-\varepsilon$} & {$>\varepsilon$} & {$<-\varepsilon$} & {$> \varepsilon$} & {$< -\varepsilon$} & {$> \varepsilon$} & {$< -\varepsilon$}\\ \hline
1 & 142 & 0.48 & 0.038 & 13 & 18 & 11 & 4 & 29 & 29 & 27 & 20\\ \hline
\textbf{2} & \textbf{213} & \textbf{0.47} & \textbf{0.024} & \textbf{8} & \textbf{8} & \textbf{3} & \textbf{0} & \textbf{23} & \textbf{19} & \textbf{15} & \textbf{11}\\ \hline
3 & 111 & 0.47 & 0.026 & 26 & 18 & 16 & 13 & 40 & 30 & 27 & 29\\ \hline
4 & 113 & 0.45 & 0.024 & 28 & 20 & 17 & 8 & 38 & 30 & 23 & 25\\ \hline
5 & 153 & 0.45 & 0.026 & 18 & 15 & 6 & 6 & 20 & 40 & 26 & 21\\ \hline
6 & 237 & 0.45 & 0.024 & 8 & 9 & 3 & 3 & 17 & 21 & 13 & 12\\ \hline
7 & 143 & 0.44 & 0.038 & 12 & 17 & 12 & 6 & 38 & 34 & 27 & 20\\ \hline
\end{tabular}
\label{table:diffOCC}
\end{table*}


\section{Conclusion}
Reconstruction of data distribution is traditionally regarded as utility. In this work, we showed that reconstruction could lead to privacy breaches even if major privacy definitions are satisfied.
We formalized a privacy definition to address this risk and presented an enforcement solution. A novelty of this work lies at the distinction between reconstruction that has privacy risk and reconstruction that does not. We leveraged this distinction to meet the dual requirement of privacy and utility. Another novelty is the independence on the particular form of the bounds on tail probabilities. Our privacy definition is a constraint on the upper bounds of tail probabilities and the Chernoff bound in particular. This formulation allows us to leverage the upper bound literature to develop a concrete solution to the problem identified. However, our approach can be instantiated to other upper bounds and modified to constrain the lower bounds of tail probabilities, thanks to the general form of the bound conversion theorem (Theorem \ref{UL-bound}).

\bibliography{vldb}

\end{document}